 \newtheorem{lemma}{Lemma}
 \newtheorem{theorem}{Theorem}
\newtheorem{fact}{Fact}
 \newtheorem{corollary}{Corollary}
\newcommand{\wadapt}{\textit{Weakly-Adaptive}}
\newcommand{\noadapt}{\textit{Non-Adaptive}}
\newcommand{\chainadapt}{\textit{$k$-Chain-Ordered}}
\newcommand{\kthick}{\textit{$k$-Ordered}}
\newcommand{\sadapt}{\textit{Strongly-Adaptive}}
\newcommand{\ordered}{\textit{Ordered}}
\newcommand{\constrained}{\textit{Constrained}}
\newcommand{\alga}{\textsc{Algorithm A}}
\newcommand{\algb}{\textsc{Algorithm B}}
\newcommand{\algc}{\textsc{Algorithm C}}
\newcommand{\elect}{\textsc{Elect-Leader}}
\newcommand{\electc}{\textsc{Gather-Leaders}}
\newcommand{\countstations}{\textsc{Count-Processes}}
\newcommand{\lcon}{\textsc{Leader-Consensus}}
\newcommand{\prop}{\textsc{Propagate-Msg}}
\newcommand{\synran}{\textsc{SynRan}}
\newcommand{\timelc}{T_{LC}}
\newcommand{\polylog}{{\mbox{polylog}}}
\newcommand{\remove}[1]{}
\newcommand{\dk}[1]{{\color{black} #1}}
\newcommand{\jm}[1]{{\color{black} #1}}
\newcommand{\cO}{{\mathcal O}}
\begin{document}

\title{On the complexity of fault-tolerant consensus
}

\author{
Dariusz R. Kowalski\footnote{Department of Computer Science,
	University of Liverpool, Ashton Building, Ashton Street,
	Liverpool L69 3BX, UK. 
	Email: \texttt{\{d.kowalski,j.mirek\}@liverpool.ac.uk}.} \and
Jaros\l{}aw Mirek\footnotemark[1]}

\date{}

\maketitle

\begin{abstract}
The paper studies the problem of reaching agreement in a distributed message-passing system prone to crash failures. 
Crashes are generated by \constrained\ adversaries - a \wadapt\ adversary, who has to fix in advance the set of $f$ crash-prone processes, 
or a \chainadapt\ adversary, who orders all the processes into $k$ disjoint chains and has to follow this pattern when crashing them. 
Apart from these constraints, both of them may crash processes in an adaptive way at any time.
While commonly used \sadapt\ adversaries model attacks
and \noadapt\ ones -- pre-defined faults, the constrained
adversaries model more realistic scenarios when there are
fault-prone dependent processes, e.g., in hierarchical 
or dependable
software/hardware systems.
We propose time-efficient consensus algorithms against such adversaries and also show how to improve the message complexity of proposed solutions.
Finally, we show how to reach consensus against a \kthick\ adversary, limited by an arbitrary partial order \dk{with a maximal anti-chain of length $k$}.
We complement our algorithmic results with (almost) tight lower bounds, and extend the one for \wadapt\ adversaries
to hold also for (syntactically) weaker \noadapt\ adversaries. Together with the consensus algorithm against \wadapt\ adversaries (which automatically translates to \noadapt\ adversaries), these results extend  the state-of-the-art of the popular class of \noadapt\ adversaries, 
in particular the result of Chor, Meritt and Shmoys~\cite{CMS},
and prove general separation between \sadapt\ and the constrained adversaries (including \noadapt) 
analyzed by Bar-Joseph and Ben-Or~\cite{BB} and others.
\end{abstract}

\noindent

\vspace*{-2ex}
\section{Introduction}

We study the problem of consensus in synchronous message passing distributed systems.
There are $n$ processes, out of which at most $f$ can crash.
Each process is initialized with a binary input value, and
the goal is to agree on one value from the input ones by all processes. 
Formally, the following three properties need to be satisfied:
{\em agreement:} no two processes decide on different values; 
{\em validity:} only a value among the initial ones may be decided upon; 
and
{\em termination:}	each process eventually decides, unless it crashes.
In case of randomized solutions, the specification of consensus needs to be reformulated, which can be done in various ways (c.f.,~\cite{AtWe}).
We consider a classic reformulation in which validity and agreement are required to hold for every execution, while termination needs to hold with probability~$1$.

Efficiency of algorithms is measured twofold: by the the number of rounds (time complexity) and by the total
number of point-to-point messages sent (message complexity) until all non-faulty processes decide.
This work focuses on \emph{efficient randomized solutions} -- time and communication performance metrics are understood in expected sense \dk{(though we also provide some analysis on
	work performance per process and time complexity formulas depending on the desired probability of achieving it)}.
%
%
%
Randomization has been used in consensus algorithms for various kinds of failures specified by adversarial models, see~\cite{Asp-DC,AtWe}.
Reason for considering randomization is to overcome inherent limitations of deterministic solutions.
Most surprising benefits of randomization is the solvability of consensus in as small as constant time~\cite{CMS,FM,Rab}.
Feasibility of achieving small upper bounds on performance of algorithms solving consensus in a given distributed environment depends on the power of adversaries inflicting failures.


\dk{On the other hand,} Chlebus, Kowalski and Strojnowski~\cite{CKS} observed that consensus cannot be solved \emph{deterministically} by an algorithm that is locally scalable (i.e., slowly growing performance function per each process) with respect to message complexity. 
They developed a globally scalable (i.e., slowly growing performance
function per average process) deterministic solution with respect to bit communication that can cope with any number $f<n$ of crash failures.
Their solution is not explicit in the sense that it relies on families of graphs with properties related to graph expansion that are shown to exist by the probabilistic method.
\dk{In this work, we investigate scalability of randomized algorithms against \constrained\
adversaries.}





\subsection{Previous and related work}
\label{s:previous}

\dk{Consensus is one of the fundamental problems in distributed computing,
with a rich history of research done in various settings
and systems, c.f.,~\cite{AtWe}. Recently its popularity grew even further due to
applications in emerging technologies such as blockchains.
Below we present only a small digest of literature
closely related with the setting considered in this work.}


Consensus is solvable in synchronous systems with processes prone to crashing, although time $f+1$ is required~\cite{FL} and sufficient~\cite{GM93} in case of deterministic solutions. 
Chor, Meritt and Shmoys~\cite{CMS} showed that randomization allows to obtain a constant expected time algorithm against 
a \noadapt\
adversary, if the minority of processes may crash.
Bar-Joseph and Ben-Or~\cite{BB} proved tight upper and lower bounds $\Theta(f/\sqrt{n\log n})$ on the expected time of randomized consensus solutions for the \sadapt\ adversary 
that can decide to fail $f=\Omega(n)$ processes.


Message complexity of consensus was extensively studied before.
Amdur, Weber and Hadzilacos~\cite{AWH} showed that $\Omega(n)$ messages need to be sent by a protocol solving consensus in an eventually synchronous environment in a failure-free execution.
Dolev and Reischuk~\cite{DR} and Hadzilacos and Halpern~\cite{HH} proved the $\Omega(fn)$ lower bound on the message complexity of deterministic consensus for Byzantine failures.

Chlebus and Kowalski~\cite{CK-SPAA-09} showed that consensus can be solved by a randomized algorithm that is locally scalable with respect to both time and bit communication complexities, when the number $f$ of failures is at most a constant fraction of the number $n$ of processes and the adversary is \noadapt. 
A  comparable performance is impossible to achieve by a deterministic solution, as was observed in~\cite{CKS}.
A deterministic algorithm solving consensus in the synchronous setting that has processes sending a total of $\cO(n\, \polylog n)$ bits per message, meaning the algorithm is globally scalable, and which can handle up to $n-1$ crashes, has been developed in~\cite{CKS}.
The globally scalable solution given in~\cite{CKS} is deterministic but it is not explicit.

Gilbert and Kowalski~\cite{GK-SODA-10} presented a randomized consensus algorithm that achieves optimal communication complexity, using $ \mathcal{O}(n) $ bits of communication and terminates in 
$ \mathcal{O}(\log n) $ time with high probability, tolerating up to $ f < n/2 $ crash failures against a \noadapt\ adversary.




\subsection{Our results}

\begin{table*}[htbp]
\begin{center}
\scalebox{0.7}{
  \begin{tabular}{|c|c|c|c|c|}
    \hline
      &  & \textbf{\sadapt} & \textbf{\wadapt} and \textbf{\noadapt} & \textbf{\chainadapt} \\\hline
      
    \multirow{2}{*}{\textbf{randomized}} & \small{upper bound} & $ \mathcal{O}\left(\sqrt{\frac{n}{\log n}}\right)$ \cite{BB} & $ \mathcal{O}\left(\sqrt{\frac{n}{(n-f)\log(n/(n-f))}}\right) $ * & $ \mathcal{O}\left(\sqrt{\frac{k}{\log k}}\log(n/k)\right)$ *\\ & \small{lower bound} & $ \Omega\left(\sqrt{\frac{n}{\log n}}\right)$ \cite{BB} & $ \Omega\left(\sqrt{\frac{n}{(n-f)\log(n/(n-f))}}\right) $ * & $\Omega\left(\sqrt{\frac{k}{\log k}}\right)$ *\\\hline
    \multirow{2}{*}{\textbf{deterministic}} & \small{upper bound} &  \multicolumn{3}{c|}{$ f + 1 $ \cite{GM93}} \\ & \small{lower bound} &\multicolumn{3}{c|}{$ f + 1 $ \cite{FL}} \\\hline
    
  \end{tabular}
}
\end{center}
  \caption{Time complexity of solutions for the consensus problem against different adversaries. Formulas with * are presented in this paper.}
  \label{tab1}
  \end{table*}

\sloppy 
\dk{We analyze the consensus problem against restricted adaptive
adversaries. The motivation is that 
a \sadapt\
adversary, typically used for analysis of randomized consensus algorithms,
may not be very realistic; for instance, in practice some
processes could be set as fault-prone in advance, before
the execution of an algorithm, or may be dependent i.e.,
in hierarchical hardware/software systems.
In this context, a \sadapt\ adversary should be used
to model attacks rather than realistic crash-prone systems.
On the other hand, a 
\noadapt\ 
adversary who must fix
all its actions before the execution does not capture
many aspects of fault-prone systems, e.g., attacks or
reactive failures (occurring as an unplanned consequence of 
some actions of the algorithm in the system).
Therefore, analyzing the complexity of consensus under such
constraints gives a much better estimate on what may happen
in real executions and, as we demonstrate, leads to new
interesting theoretical findings about the performance of 
consensus algorithms.}

Table \ref{tab1} presents time complexities of solutions for the consensus problem against different adversaries.
\remove{
For deterministic algorithms, \constrained\ adversaries are consistent with the \sadapt\ one, 
as they all could simulate the algorithm before its execution and determine best decisions.
}
\dk{Results for the \sadapt\ adversary and for deterministic
algorithms are known (see Section~\ref{s:previous}), while the other ones are delivered
in this work.}
We design and analyze a randomized algorithm that reaches consensus 
in expected $ \mathcal{O}\left(\sqrt{\frac{n}{(n-f)\log(n/(n-f))}}\right) $ rounds
using 
$\mathcal{O}\left(n^2+\sqrt{\frac{n^5}{(n-f)^5\log(n/(n-f))}}\right)$
point-to-point messages, in expectation,
against any \wadapt\ adversary that may crash up to 
$f<n$ processes.
This result is time optimal due to the proved lower bound
$\Omega\left(\sqrt{\frac{n}{(n-f)\log(n/(n-f))}}\right)$
on expected number of rounds.

The lower bound could be also generalized to hold against
the (syntactically) weaker \noadapt\ adversaries, therefore all the results concerning \wadapt\ adversaries delivered in this paper hold for \noadapt\ adversaries as well.
This extends the state-of-the-art of the study of \noadapt\ adversaries done in high volume of previous work,
c.f.,~\cite{CMS,CK-SPAA-09,GK-SODA-10},
specifically, an $O(1)$ 
expected time algorithm of Chor et al.~\cite{CMS} only for a constant (smaller than $1$) fraction of failures. 
Our lower bound is the first non-constant formula 
depending on the number of
crashes proved for this adversary. 
%
In view of the lower bound $\Omega\left(\frac{f}{\sqrt{n\log n}}\right)$ \cite{BB} on the expected number of
rounds of any consensus algorithm against a \sadapt\
adversary crashing at most $f$ processes, 
\emph{our result
	shows separation of the two important classes of adversaries -- 
	\noadapt~and \sadapt\ -- 
for one of the most fundamental problems in distributed computing, 
which~is~consensus.}

\dk{Furthermore we show an extension of the above mentioned algorithm} accomplishing consensus in
slightly (polylogarithmically) longer number $\mathcal{O}\left(\sqrt{\frac{n\log^5 n}{(n-f)}}\right)$ of expected rounds but
using a substantially smaller number of $\mathcal{O}\left(\left(\frac{n}{n-f}\right)^{3/2} \log^{7/2} n + n\log^4 n\right)$ point-to-point messages in expectation.
Time and message complexities of the latter algorithm are 
only polylogarithmically far from the lower bounds for almost all values of $f$.
More specifically, only for $f=n-\Omega(\sqrt{n})$ the message complexity is away from the lower bound by some small polynomial.

We complement the results by showing how to modify the algorithm designed for the \wadapt\ adversary, to work against a \chainadapt\ adversary, who has to arrange all processes into an order of $ k $ chains, and then has to preserve this order of crashes in the course of the execution.
The algorithm reaches consensus in $ \mathcal{O}\left(\sqrt{\frac{k}{\log k}}\log(n/k)\right) $
\dk{rounds in expectation}. \jm{Additionally, we show a lower bound $\Omega\left(\sqrt{\frac{k}{\log k}}\right)$ for the problem against a \kthick\ adversary.}
Finally, we show that this solution is capable of running against an arbitrary partial order with a maximal anti-chain of size $ k $.
\dk{Similarly to results for the \wadapt\ adversary, formulas obtained for \ordered\ adversaries separate
them from \sadapt\ ones.}

For algorithms proposed in this paper, we also analyze expected work per process and
give time complexity formulas depending on 
the desired probability of achieving it.

Due to space limitations some parts of this work were deferred to the Appendix.
%

\section{Model}

\paragraph{\textbf{Synchronous distributed system.}}
We assume having a system of $ n $ processes that communicate in the message passing model. This means that processes form a complete graph where each edge represents
a communication link between two processes. If process $ v $ wants to send a message to process $ w $, then this message is sent via link $ (v,w) $. It is worth noticing that links are symmetric, i.e.,
$ (v,w) = (w,v)$. We assume that messages are sent instantly.

Following the synchronous model by \cite{BB}, we assume that computations are held in a synchronous manner and hence time is divided into rounds consisting of two phases:

\noindent
 -- Phase A - generating local coins and local computation.
 
 \noindent
 -- Phase B - sending and receiving messages.

\paragraph{\textbf{Adversarial scenarios.}}

Processes are prone to crash-failures that are a result of the adversary activity. The adversary of our particular interest is an adaptive one - it can make arbitrary decisions and see all local computations
and local coins, as well as messages intended to be sent by active processes. Therefore, it can decide to crash processes during phase B. Additionally while deciding that a certain process will crash, it can decide which subset of messages will reach their recipients.

\noindent
In the context of the adversaries in this paper we distinguish three types of processes:

\noindent
 -- Crash-prone - processes that can be crashed by the adversary.
 
 \noindent
 -- Fault-resistant - processes that are not in the subset of the \wadapt\ adversary and hence cannot be crashed.
 
 \noindent
 -- Non-faulty - processes that survived until the end of the algorithm.

\noindent
\textbullet \; \textit{\sadapt\ and \wadapt\ adversaries.}
 The only restriction for the \sadapt\ adversary is that it can fail up to $ f $ processes, where $ 0 \leq f < n $.
 
 The \wadapt\ adversary is restricted by the fact that before the algorithm execution it must choose $ f $ processes that will be prone to crashes,
where $ 0 \leq f < n $. 

Observe that for deterministic algorithms the \wadapt\ adversary is consistent with the \sadapt\ adversary, because it could simulate the algorithm before its execution
and decide on choosing the most convenient subset of processes.

\noindent
\textbullet \; \textit{\chainadapt\ and \kthick\ adversaries.}
The notion of a \chainadapt\ adversary originates from partial order relations, hence appropriate notions and definitions translate straightforwardly.
The relation of our particular interest while considering partially ordered adversaries is the precedence relation.
Precisely, if some process $ v $ precedes process $ w $ in the partial order of the adversary, then we say that $ v $ and $ w $ are comparable. This means that
process $ v $ must be crashed by the adversary before process $ w $. 
Consequently a subset of processes where every pair of processes is comparable is called a chain. On the other hand a subset of processes where no two different processes are comparable is called an anti-chain.

It is convenient to think about the partial order of the adversary from a Hasse diagram perspective. The notion of chains and anti-chains seems to be intuitive when graphically presented, e.g., a chain
is a pattern of consecutive crashes that may occur while an anti-chain gives the adversary freedom to crash in any order due to non-comparability of processes.

Formally, the \chainadapt\ adversary has to arrange \textbf{all} the processes into a partial order consisting of $ k $ disjoint chains of arbitrary length that represent in what order these processes may be crashed. 

By the \textit{thickness} of a partial order $P$ we understand the maximal size of an anti-chain in~$P$.
An adversary restricted by a wider class of partial orders of thickness $k$
is called a \kthick\ adversary.

We refer to a wider class of adversaries in this paper, constrained by an arbitrary partial order, as \ordered\ adversaries. What is more, adversaries having additional limitations, apart from the possible number of crashes (i.e. all described in this paper but the \sadapt\ adversary), will be called \constrained\ adversaries.

Note that \ordered\ adversaries are also restricted by the number of possible crashes $ f $ they may enforce. Additionally, throughout this paper we denote $ l_{j} $ as the length of chain $ j $

\noindent
\textbullet \; \textit{\noadapt\ adversaries.}
The \noadapt\ adversaries are characterised by the fact that they must fix all their decisions prior to the execution of the algorithm and then follow this pattern during the execution.
%

\paragraph{\textbf{Consensus problem.}}
In the consensus problem $ n $ processes, each having its input bit $ x_{i} \in \{0, 1\}$, $ i \in \{1,\dots,n\}$, have to agree on a common output bit in the presence of the adversary, capable of crashing processes.
We require any consensus protocol to fulfill the following conditions:


\noindent
 -- Agreement: all non-faulty processes decide the same value. 
 
 \noindent
 -- Validity: if all processes have the same initial value $ v $, then $ v $ is the only possible decision value.
 
 \noindent
 -- Termination: all non-faulty processes decide with probability~$1$.

\dk{We follow typical assumption that the first two requirements
must hold in \emph{any} execution, while termination should be satisfied with probability $1$.}

\paragraph{\textbf{Complexity measures.}}
The main complexity measure used to benchmark the consensus problem is the number of rounds by which all non-faulty processes decide on a common value.

On the other hand we discuss on the message complexity, understood as the total number of messages sent in the system, until all non-faulty processes decide on a common value.

\jm{The expected work per process is the expected number of non-idle computational steps of a process accrued during the execution of an algorithm.}

\paragraph{\textbf{Algorithmic tools.}} \label{tools}
Throughout this paper we use black-box fashioned procedures that allow us to clarify the presentation. We now briefly describe their properties and refer to them in the algorithms' analysis.

\noindent
\textit{\lcon\ properties.} Because we treat the \lcon\ procedure as a black-box tool for reaching consensus on a small group of processes, we need to define properties that we expect to be met by
this protocol:

\noindent
 \textbullet \; 
 \lcon\ reaches consensus on a group of no more than $ \frac{6n}{n-f} $ processes in time $\timelc(\frac{6n}{n-f})$ with probability at least $ \frac{9}{10} $, where 
 $ \timelc $ is time complexity of \lcon.
 
 \noindent
 \textbullet \; \lcon\ satisfies termination, validity and agreement.

A candidate solution to serve as \lcon\ is the Ben-Or and Bar-Joseph's \synran\ algorithm from \cite{BB} and we refer the reader to the details therein.

\noindent
\textit{\prop\ properties.} We assume that procedure \prop\ propagates messages in $ 1 $ round with $ \mathcal{O}(n^{2}) $ message complexity. This is consistent with a scenario where
full communication takes place and each process sends a message to all the processes.
 
\section{\wadapt\ adversary}

In this section we consider the fundamental result i.e. \alga\ that consists of two main components - a leader election procedure, 
and a reliable consensus protocol. We combine them together in an appropriate way, in order to reach consensus against a \wadapt\ adversary. 

\subsection{\alga}

\begin{algorithm}
{initialize list $\texttt{LEADERS} $ to an empty list\;}
{$decided := false$\;}
{$value := x_{v}$\;}
{\Repeat{$decided$}
    {
    {$ \texttt{LEADERS} := \elect $\;}

    {\If{$\texttt{LEADERS}$ contains $v$}{\lcon($\texttt{LEADERS}$) \;
      {\If{consensus achieved}{execute \prop($value$) twice\;
      $ decided = true $\;}
      \Else{wait for two rounds\;}}}  
    }
    {\Else{idle for $\timelc$ rounds\;
	   {\If{heard the same consensus value $CV_{w}$ twice from some process $w$}{$value := CV_{w}$\;
	      $decided = true $\;		     	
	   }
	   {\If{heard consensus value $CV_{w}$ once from some process $w$}{$value := CV_{w}$\;}
	   }
   }
   }
   }
   {clear list $ \texttt{LEADERS} $\;}
}
}

\caption{\alga, \dk{pseudocode} for process~$ v $ }
\label{algorithm1}
\end{algorithm}

\begin{algorithm}
{$\texttt{coin} := \frac{1}{n-f}$\;}
{initialize list $ \texttt{LEADERS} $ to an empty list\;}
{
{toss a coin with the probability $ \texttt{coin} $ of heads to come up\;}
{\If{heads came up in the previous step}{
    \prop(``$ v $'') to all other processes\;
    add $ v $ to list $\texttt{LEADERS}$\;}
}
fill in list $ \texttt{LEADERS}$ with elected leaders' identifiers from received messages\;
}
return $\texttt{LEADERS}$\;
\caption{\elect, \dk{pseudocode} for process~$v$}
\label{algorithm2}
\end{algorithm}


\alga\ has an iterative character and begins with a leader election procedure in which we expect to elect $ \mathcal{O}(\frac{n}{n-f}) $ leaders simultaneously. The leaders run \lcon\ procedure in which they are expected to reach consensus within
their own group. We call this a \textit{small consensus}. If they do so this fact is propagated to all the processes via \prop\ so that they know about reaching the small consensus and set their consensus values accordingly. Communicating this fact, implies reaching \textit{consensus} by the whole system.

Let us follow Algorithm \ref{algorithm1} from the perspective of some process $ v $. At the beginning of the protocol every process takes part in \elect\ procedure and process $ v $ tosses
a coin with probability of success equal $ \frac{1}{n-f} $ and either is chosen to the group of leaders or not. If it is successful, then it communicates this fact to all processes.

Process $ v $ takes part in \lcon\ together with other leaders in order to reach a small consensus and if the consensus value is fixed $ v $ tries to convince other processes to this value twice. This is because if some process $ w \neq v $ receives the consensus value in the latter round, then it may be sure that other processes received this value from $ v $ as well in the former round
(so in fact every process has the same consensus value fixed). Process $ v $ could not propagate its value for the second time if it was not successful in propagating this value to every other process for the first time - if just one process did not receive this value this would be consistent with a crash of $ v $.

If process $ v $ was not chosen a leader then it listens the channel for an appropriate amount of time and afterwards tries to learn the consensus value twice. If it is 
unable to hear a double-consensus, then it idles for two rounds for the purpose of synchronization. If consensus is not reached, then the protocol starts again with electing another group of leaders. Nevertheless, if process $ v $ hears a consensus value once, it holds and assigns it as a candidate consensus value. This guarantees continuity of the protocol. 

The idea standing behind \alga\ is built on the fact that if just one fault-resistant process is elected to the group of leaders then the adversary is unable to crash it in the course of an execution.
If additionally, the small consensus protocol works well and all messages are propagated properly, then consensus is achieved after a certain number of rounds.

\subsection{Analysis of \alga}

We begin with a lemma stating that \elect\ is likely to elect $ \mathcal{O}(\frac{n}{n-f}) $ leaders. 

\begin{lemma}
 Let $ L $ denote the number of leaders elected by \elect. $ L < 6\frac{n}{n-f} $ at least with probability $ \frac{9}{10} $.
 \label{lemma1}
\end{lemma}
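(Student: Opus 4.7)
The plan is to observe that $L$ is stochastically dominated by a binomial random variable and then apply a multiplicative Chernoff-style tail bound. In \elect, each of the (at most) $n$ processes independently tosses a coin with head probability $p = 1/(n-f)$, and is counted in $L$ only if heads comes up. The adversary acts only in phase B and can at most crash processes or suppress messages; neither action affects the outcome of the phase-A coin tosses that determined self-election. Hence $L$ is upper-bounded, in distribution, by $X \sim \mathrm{Bin}(n,p)$, whose mean is $\mu := n/(n-f) \geq 1$.

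The next step is to invoke the multiplicative Chernoff bound with deviation parameter $\delta = 5$, which gives
\[
\Pr[L \geq 6\mu] \;\leq\; \Pr[X \geq (1+\delta)\mu] \;\leq\; \left( \frac{e^{\delta}}{(1+\delta)^{1+\delta}} \right)^{\mu} \;=\; \left( \frac{e^{5}}{6^{6}} \right)^{\mu} \;\leq\; \frac{e^{5}}{6^{6}} \;<\; \frac{1}{10} \, ,
\]
where the penultimate inequality uses $\mu \geq 1$ together with the fact that $e^{5}/6^{6} \approx 3.2 \cdot 10^{-3}$ is a number smaller than one raised to a power at least one. Rearranging yields $\Pr[L < 6n/(n-f)] \geq 9/10$, which is exactly the claim.

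The only real subtlety, and the place where I would be most careful, is pinning down what ``the number of leaders elected'' means under an adaptive adversary: one must justify that adversarial crashes can only \emph{decrease} the count of processes that tossed heads and self-added to \texttt{LEADERS}, so it is legitimate to bound $L$ by the fully unrestricted $\mathrm{Bin}(n,1/(n-f))$ obtained when no one is crashed during \elect. Beyond this observation, the argument reduces to a clean constant-parameter Chernoff calculation; I do not foresee any delicate obstacle.
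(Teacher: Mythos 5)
Your proof is correct and follows essentially the same route as the paper: both model the leader count as a sum of independent Bernoulli indicators with mean $\mu = \frac{n}{n-f}$ and apply a Chernoff tail bound at threshold $6\mu$ (the paper uses the corollary $\mathbb{P}[X \geq R] \leq 2^{-R}$ for $R \geq 6\mu$, you use the standard multiplicative form with $\delta = 5$; both give a bound comfortably below $\frac{1}{10}$). Your extra remark about adversarial crashes only decreasing the count is a reasonable clarification but does not change the substance of the argument.
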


\begin{lemma}
 Procedure \elect\ elects at least one sustainable, fault-resistant leader with probability at least $ \frac{6}{10} $.
 \label{lemma2}
\end{lemma}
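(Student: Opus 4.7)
The plan is to reduce the statement to a straightforward calculation about independent Bernoulli trials restricted to the fault-resistant processes. Since the adversary is \wadapt, the set of $f$ crash-prone processes is fixed in advance, so there are exactly $n-f$ fault-resistant processes, and the adversary has no power to modify this partition once the algorithm starts.

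First I would observe that in \elect\ each process independently flips a coin with heads probability $\frac{1}{n-f}$, and that this coin flip is a purely local random choice that the adversary cannot bias in any way. Restricting attention to the $n-f$ fault-resistant processes, the event ``no fault-resistant process is elected'' is the intersection of $n-f$ independent events, each of probability $1-\frac{1}{n-f}$. Hence the probability that none of the fault-resistant processes becomes a leader equals
\[
\left(1 - \frac{1}{n-f}\right)^{n-f} \,\le\, \frac{1}{e},
\]
using the standard inequality $(1-1/m)^m \le 1/e$ (valid for every integer $m\ge 1$). Consequently the probability that at least one fault-resistant process becomes a leader is at least $1 - 1/e > 6/10$, as required.

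It remains to explain the word \emph{sustainable}. I would argue that any fault-resistant process $u$ that tosses heads actually ends up in the \texttt{LEADERS} list of every non-faulty process: by definition of a \wadapt\ adversary, $u$ cannot be crashed at any point, so its invocation of \prop(``$u$'') inside \elect\ delivers its identifier to every process that has not already been crashed, and $u$ itself also survives every subsequent round. Hence such a $u$ is indeed both fault-resistant and sustainable. The fact that other, crash-prone, leaders might be silenced by the adversary is irrelevant for the conclusion of the lemma, which only asserts the existence of one such good leader.

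The only mild subtlety — and what I would flag as the main place where one must be careful — is ensuring that the probabilistic argument is unaffected by the adaptivity of the adversary. This is handled by the order of quantifiers in the \wadapt\ model: the set of $n-f$ fault-resistant processes is chosen before any random bits are drawn, so the coin tosses of those $n-f$ processes remain mutually independent Bernoulli$(1/(n-f))$ random variables regardless of how the adversary later reacts to the algorithm's behaviour. Everything else is a direct application of $(1-1/m)^m \le 1/e$.
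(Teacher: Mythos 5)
Your proof is correct and follows essentially the same route as the paper: both compute the probability of the complementary event that none of the $n-f$ fault-resistant processes tosses heads, bound $\left(1-\frac{1}{n-f}\right)^{n-f}$ by $e^{-1}$, and conclude that the success probability is at least $1-e^{-1}\geq \frac{6}{10}$. Your additional remarks on sustainability and on the order of quantifiers in the \wadapt\ model are sensible elaborations but do not change the argument.
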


\begin{theorem} \label{theorem1}
The following hold for \alga\ against the \wadapt\ adversary:

  \noindent
  \textbf{a)} \alga\ reaches consensus in the expected number of rounds equal $ \mathcal{O}\left(\timelc\left(\frac{n}{n-f}\right)\right)$, satisfying termination, agreement and validity.
  
  \noindent
  \textbf{b)} Expected work per process of \alga\ is $ \frac{1}{n-f} \; \mathcal{O}\left(\timelc\left(\frac{n}{n-f}\right)\right) $.
  
  \noindent
  \textbf{c)} For some fixed $ \epsilon > 0 $, \alga\ reaches consensus in 
  $ \mathcal{O}\left(\log\left(\frac{1}{\epsilon}\right) \; \timelc\left(\frac{n}{n-f}\right)\right) $ rounds and with work per process $ \frac{1}{n-f} \; \mathcal{O}\left(\log\left(\frac{1}{\epsilon}\right) \;\timelc\left(\frac{n}{n-f}\right)\right) $ with probability $ 1 - \epsilon $.

\end{theorem}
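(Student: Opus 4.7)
}
The plan is to handle parts (a), (b), (c) in sequence, with the correctness argument (a) doing most of the heavy lifting. I would first establish \emph{validity} and \emph{agreement} as invariants that hold in every execution, and only then analyze the random number of rounds to termination, reusing the same invariants for (b) and (c).

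For validity, I would argue by induction on iterations that the $value$ variable of every non-crashed process is always the input value of some process: initially $value:=x_v$, and the only updates to $value$ come either from \lcon\ (which by assumption satisfies validity on the \texttt{LEADERS} set, whose inputs are drawn inductively from the pool of valid values) or from consensus values received via \prop, which were themselves produced by \lcon. For agreement I would focus on the double-\prop\ mechanism: I would show that if any process sets $decided:=true$ in some iteration $i$, then by the end of iteration $i$ every non-faulty process has the same value in its $value$ variable. The key observation is that a process can decide only after either (i) participating in a successful \lcon\ and then completing \emph{two} consecutive \prop\ rounds, or (ii) hearing the same consensus value twice. Because \prop\ takes $1$ round and the adversary can selectively drop messages only when the sender crashes in that same round, the second \prop\ round by a leader can reach any process only if the first \prop\ round by that same leader reached \emph{every} process (otherwise the leader would have been crashed already). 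This yields the invariant: if any process decides $v$, all non-faulty processes store $v$ as their $value$; combined with validity of \lcon, all future iterations can only reach small consensus on $v$, so agreement is preserved.

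With correctness in place, I would define an iteration as \emph{successful} when all three events hold: (E1) $\elect$ produces at most $6n/(n-f)$ leaders (Lemma~\ref{lemma1}, probability $\ge 9/10$), (E2) at least one fault-resistant leader is elected (Lemma~\ref{lemma2}, probability $\ge 6/10$), and (E3) \lcon\ succeeds on the leader set (probability $\ge 9/10$ by assumption on \lcon). By a union bound over the complements, a single iteration is successful with probability at least a constant $c>0$. In a successful iteration, a fault-resistant leader completes \lcon\ and then executes two \prop\ rounds that the adversary cannot disrupt; hence at least one process (in fact all reached by \prop) decides, and by the agreement invariant above, all non-faulty processes will decide at the latest in the following iteration via the ``heard twice'' branch. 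Each iteration costs $\elect = O(1)$ rounds plus $\timelc(6n/(n-f))$ rounds for \lcon\ plus $O(1)$ rounds for the two \prop\ or synchronization waits, which is $O(\timelc(n/(n-f)))$. Since the number of iterations until success is geometric with success probability $\ge c$, its expectation is $O(1)$, giving (a).

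For (b), I would observe that in a single iteration a process participates in the expensive \lcon\ only if it is elected a leader, which happens with probability $1/(n-f)$ in $\elect$. Non-leaders only listen and perform $O(1)$ work per iteration beyond the idle-synchronization rounds, which do not count as work. Thus expected work per process per iteration is $\frac{1}{n-f}\cdot\timelc(6n/(n-f)) + O(1)$, and multiplying by the expected $O(1)$ iterations yields $\frac{1}{n-f}\cdot O(\timelc(n/(n-f)))$. For (c), since each iteration is successful with probability $\ge c$ independently of previous iterations (the randomness in $\elect$ and \lcon\ is fresh each round), the probability that no iteration among the first $k$ is successful is at most $(1-c)^k$; setting $(1-c)^k \le \epsilon$ gives $k = O(\log(1/\epsilon))$, and the claimed round and work bounds follow by multiplying by the per-iteration cost. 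The main obstacle I anticipate is formalizing the double-\prop\ invariant against a \wadapt\ adversary that sees local coins and can selectively drop messages in the round it crashes a process; the argument requires tracking carefully which processes could still be ``out of sync'' after a partial crash and ensuring the ``heard once'' branch propagates the correct value into the subsequent iteration's $\elect$/\lcon\ inputs.
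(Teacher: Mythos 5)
Your proposal is correct and follows essentially the same route as the paper's proof: the same three per-iteration events (Lemmas~\ref{lemma1} and~\ref{lemma2} plus the \lcon\ guarantee) combined into a constant per-iteration success probability, a geometric number of iterations yielding $O(1)$ expected iterations for (a) and the $O(\log(1/\epsilon))$ tail bound for (c), and the leader-participation probability $\frac{1}{n-f}$ for (b). The only differences are minor: you spell out the double-\prop\ agreement invariant in more detail than the paper (which argues agreement by a brief appeal to the agreement property of \lcon), and for (b) the paper justifies multiplying the expected per-iteration work by the expected number of iterations via Wald's equation, a step your ``multiply the expectations'' phrasing implicitly relies on.
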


\begin{proof}
   
   Let us describe three events and their corresponding probabilities:
   
    \noindent
    - (Lemma \ref{lemma1}) \elect\ chooses less than $ 6\frac{n}{n-f} $ leaders, with prob. bigger than~$\frac{9}{10}$.
    
    \noindent
    - (Lemma \ref{lemma2}) \elect\ chooses at least one fault-resistant leader, with prob. greater than~$\frac{6}{10} $.
    
    \noindent
    - (c.f. Section \ref{tools}) \lcon\ finishes the small consensus within its expected time $\timelc(\frac{6n}{n-f})$ , with probability~$\frac{9}{10}$. 
   
   Let us define the conjunction of these events as the \textit{success of the algorithm} and denote it as $ Z $.
   Since for any events $ A $ and $ B $ we have that $ \mathbb{P}(A \cap B) \geq \mathbb{P}(A) + \mathbb{P}(B) - 1 $,
   it is easy to observe that $ Z $ happens with probability at least $ \frac{2}{5} $.
   
   \noindent
\textbf{a)}
    If $ Z $ holds then \alga\ reaches consensus in time $ \mathcal{O}(\timelc(\frac{6n}{n-f}))$.   
   In what follows since $ Z $ takes place with probability $ \frac{2}{5} $, then  $ \frac{5}{2} $ is the expected number of iterations of \alga\ until reaching consensus. Consequently \alga\ reaches consensus in the expected number of rounds equal
   $ \mathcal{O}(\timelc(\frac{6n}{n-f})) $.
   
   Because the termination condition is satisfied by \lcon\ and the expected number of iterations of \alga\ is a constant, \alga\ also meets the
   termination condition.

   The validity condition is also satisfied, by the validity of the \lcon\ protocol and the fact that 
   \alga\ does not change values held by processes. 
   
   In order to prove the agreement condition let us suppose to the contrary that in some execution $\mathcal{A}$ there is process $ v $ that reached consensus with value $ 1 $ 
   and process $ w $ that reached consensus with value $ 0 $. Process $ v $ did set value $ 1 $ as a result of the small consensus reached by the \lcon\ procedure. 
   So did process $ w $ with value $ 0 $. This contradicts the agreement condition of \lcon.

\end{proof}

\begin{corollary}
 Instantiating \lcon\ with \synran\ from \cite{BB} results in:

  \noindent
  \textbf{a)} $ \mathcal{O}\left(\sqrt{\frac{n}{(n-f)\log(n/(n-f))}}\right) $ expected rounds to reach consensus by \alga,
  
  \noindent
  \textbf{b)} expected work per process equal $ \mathcal{O}\left(\sqrt{\frac{n}{(n-f)^3\log(n/(n-f))}}\right) $,
  
  \noindent
  \textbf{c)} $ \mathcal{O}\left(\log\left(\frac{1}{\epsilon}\right)\sqrt{\frac{n}{(n-f)\log(n/(n-f))}}\right) $ rounds to reach consensus by \alga\ with probability $ 1 - \epsilon $, for $ \epsilon > 0 $.

\end{corollary}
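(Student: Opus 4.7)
The plan is to apply Theorem~\ref{theorem1} as a black box, using the \synran\ protocol of \cite{BB} as the concrete instantiation of \lcon. The work reduces to (i) checking that \synran\ meets the specification of \lcon\ listed in Section~\ref{tools}, and (ii) performing a direct algebraic substitution.

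First, I would verify the interface. Bar-Joseph and Ben-Or show that on a system of $m$ processes, \synran\ reaches agreement in $\mathcal{O}(\sqrt{m/\log m})$ expected rounds against a \sadapt\ adversary crashing an arbitrary number of them, while satisfying termination, validity, and agreement in every execution. Invoking \synran\ on the (at most $\tfrac{6n}{n-f}$) leaders returned by \elect, and truncating after a sufficiently large constant multiple of its expectation, gives a procedure whose success probability is at least $\tfrac{9}{10}$ by Markov's inequality. Thus we may set $\timelc(m)=\mathcal{O}(\sqrt{m/\log m})$ and conclude that \synran\ is a legitimate choice for \lcon.

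Second, I would plug this $\timelc$ into the generic formulas from Theorem~\ref{theorem1}. Setting $m=\tfrac{6n}{n-f}$ and absorbing the constant $6$ into the $\mathcal{O}(\cdot)$, we obtain
\[
\timelc\!\left(\tfrac{6n}{n-f}\right) \;=\; \mathcal{O}\!\left(\sqrt{\tfrac{n/(n-f)}{\log(n/(n-f))}}\right) \;=\; \mathcal{O}\!\left(\sqrt{\tfrac{n}{(n-f)\log(n/(n-f))}}\right),
\]
which by Theorem~\ref{theorem1}(a) yields claim~(a) of the corollary. Part~(b) then follows by multiplying this bound by $\tfrac{1}{n-f}$ as prescribed by Theorem~\ref{theorem1}(b), taking the factor $\tfrac{1}{n-f}$ inside the square root as $\tfrac{1}{(n-f)^2}$. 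Part~(c) follows from Theorem~\ref{theorem1}(c) by multiplying the expression from part~(a) by $\log(1/\epsilon)$.

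The main potential obstacle is the boundary behaviour of the formulas when $n-f$ is very close to $n$, because then the group of leaders has constant size and the logarithm $\log(n/(n-f))$ may be a small constant; one has to check that in this regime the substitution is still consistent with the performance guarantee of \synran\ on small inputs, which is immediate since the $\mathcal{O}(\sqrt{m/\log m})$ bound of \cite{BB} holds uniformly. Beyond this sanity check, the argument is purely substitution, so I do not anticipate further difficulty.
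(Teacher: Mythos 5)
Your proposal is correct and matches the paper's (implicit) argument: the corollary is obtained exactly by substituting $\timelc(m)=\mathcal{O}(\sqrt{m/\log m})$ from the \synran\ protocol of \cite{BB} into the three formulas of Theorem~\ref{theorem1}, and your algebra for parts (a)--(c) is right. Your additional check that \synran\ meets the $9/10$-success interface via truncation and Markov's inequality is a detail the paper leaves to the reader but is consistent with its intent.
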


\alga\ is designed in the way that most of the processors wait until the consensus value is decided and propagated through the communication medium.
Expected work per process value leads to an observation that, considering more practical scenarios, most of the processors can take care about different processes, while only a subset is responsible for actually reaching consensus.
 
\subsection{Improving message complexity}


The expected number of point-to-point messages sent by processes during the
execution of \alga\ could be as large as the expected time $\mathcal{O}\left(\sqrt{\frac{n}{(n-f)\log(n/(n-f))}}\right)$
multiplied by $\cO(n^2)$ messages sent each round by the communication
procedure \prop, which in the rough calculation could yield $\mathcal{O}\left(n^2\sqrt{\frac{n}{(n-f)\log(n/(n-f))}}\right)$
of total point-to-point messages sent.
Estimating more carefully, each round of \prop\ outside the \lcon\ execution
costs $\cO(n^2)$ point-to-point messages, and each message
exchange inside the \lcon\ costs $\cO\left(\frac{n^2}{(n-f)^2}\right)$
with constant probability,
which gives a more precise upper bound 
$\mathcal{O}\left(n^2+\sqrt{\frac{n^5}{(n-f)^5\log(n/(n-f))}}\right)$
on the expected number of point-to-point messages.

In order to improve message complexity, we modify \alga\
in the following way -- we call the resulting protocol \algb. 
Each communication round is implemented by using a gossip
protocol from~\cite{CK-DISC06}, which is a deterministic protocol guaranteeing
successful exchange of messages between processes which are non-faulty at
the end of this protocol in $O(\log^3 n)$ rounds using $O(m\log^4 n)$ 
point-to-point messages, where $m\le n$ is the known upper bound on the number of participating processes.
This gives $\cO(n\log^4 n)$ point-to-point messages per execution of \prop\ outside of \lcon\ procedure, and 
$\cO\left(\frac{n\log^4 n}{n-f}\right)$
point-to-point messages, with constant probability, in each exchange during the execution
of \lcon.
Therefore, \algb\ accomplishes consensus in $\mathcal{O}\left(\sqrt{\frac{n\log^5 n}{(n-f)}}\right)$ expected rounds
using $\mathcal{O}\left(\left(\frac{n}{n-f}\right)^{3/2} \log^{7/2} n + n\log^4 n\right)$ point-to-point messages in expectation. In particular,
for large range of crashes $f=n-\Omega(\sqrt{n})$, 
\algb\ is nearly optimal (i.e., with respect to a polylogarithm of $n$)
in terms of both expected time and message complexity.
For a discussion about lower bounds see Section \ref{sec4}.

\subsection{Lower bound}
\label{sec4}

\begin{theorem}
	\label{t:lower}
	The expected number of rounds of any consensus protocol
	running against a \wadapt\ or a \noadapt\ adversary causing up to $f$
	crashes is $\Omega\left(\sqrt{\frac{n}{(n-f)\log(n/(n-f))}}\right)$.
\end{theorem}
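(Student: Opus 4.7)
The plan is to reduce the target bound to Bar-Joseph and Ben-Or's lower bound $\Omega(f'/\sqrt{n'\log n'})$ on randomized consensus against a \sadapt\ adversary. Setting $n' = k := \lceil n/(n-f) \rceil$ and $f' = k-1$, that bound reads $\Omega(\sqrt{k/\log k})$, and a direct algebraic manipulation gives $\sqrt{k/\log k} = \Theta\bigl(\sqrt{n/((n-f)\log(n/(n-f)))}\bigr)$, matching the claim. It therefore suffices to compile any randomized consensus protocol $\Pi$ for $n$ processes against a \wadapt\ adversary with $f$ crashes (expected time $T$) into a randomized consensus protocol $\Pi^\star$ for $k$ processes against a \sadapt\ adversary with $k-1$ crashes and expected time $O(T)$.

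I would carry this out by a group simulation. Partition the $n$ processes of $\Pi$ into $k$ disjoint groups $G_1,\ldots,G_k$, each of size $n-f$, and let the virtual process $v_i$ of $\Pi^\star$ simulate in lockstep every member of $G_i$: it holds their states, generates their random coins, and in each round bundles all messages from members of $G_i$ to members of $G_j$ into a single virtual message $v_i \to v_j$. Thus one round of $\Pi^\star$ implements one round of $\Pi$, and when the \sadapt\ adversary crashes $v_i$ in some round the simulation crashes every process of $G_i$ in the corresponding round of $\Pi$. The arithmetic balances: crashing $k-1$ of the $k$ virtuals costs exactly $(k-1)(n-f) = f$ real crashes, matching the \wadapt\ budget.

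The main obstacle is that the \wadapt\ adversary on $\Pi$ must commit to its crash-prone subset in advance, whereas the \sadapt\ adversary on $\Pi^\star$ may choose the protected virtual adaptively. I would bridge this gap via Yao's minimax principle: sample the protected index $i^\star \in [k]$ uniformly at random before the run and declare $\bigcup_{i \ne i^\star} G_i$ to be the crash-prone set of $\Pi$. Conditioned on any realisation of $i^\star$, the subsequent adaptive crashes can then emulate any \sadapt\ schedule on $\Pi^\star$ that does not crash $v_{i^\star}$; averaging over $i^\star$ costs at most a factor of $k/(k-1) \le 2$ in the expected time, so the resulting \wadapt\ adversary inherits the $\Omega(\sqrt{k/\log k})$ lower bound.

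To extend the bound to the syntactically weaker \noadapt\ adversary, the key observation is that the group simulation above does not essentially rely on $\Pi^\star$ being attacked by a truly adaptive adversary: it is enough that for every deterministic consensus algorithm on $\Pi^\star$ there is a distribution over non-adaptive crash schedules forcing the Bar-Joseph--Ben-Or expected-time lower bound, which can be obtained by pre-simulating the deterministic algorithm together with a fixed hard input distribution. Pulling such a distribution back through the partition into groups yields a distribution over \noadapt\ crash patterns on the $n$ real processes, and a final invocation of Yao's principle transfers the same lower bound to randomized algorithms running under a \noadapt\ adversary on $\Pi$, completing the extension.
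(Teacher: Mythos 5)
Your overall strategy---partition the $n$ processes into $k=n/(n-f)$ groups of size $n-f$, simulate each group as one virtual process, and pull the Bar-Joseph--Ben-Or bound back through the simulation---is exactly the paper's approach. However, there is a genuine gap in the step where you convert the \sadapt\ adversary's adaptively chosen surviving group into a group that a \wadapt\ adversary must protect in advance. You set the virtual crash budget to $f'=k-1$, so in a hard execution of the Bar-Joseph--Ben-Or adversary essentially all $k-1$ crashable virtuals are eventually crashed and only one group survives, and which one survives depends on the coin flips of the execution. A pre-committed uniformly random $i^\star$ therefore coincides with the survivor with probability only about $1/k$, not $(k-1)/k$; on the other executions the \wadapt\ adversary gets stuck the moment the emulated schedule asks it to crash $v_{i^\star}$, and you can no longer lower-bound the running time. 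So the averaging costs a factor of $k$, not your claimed $k/(k-1)\le 2$, which reduces the bound to $\Omega\bigl(\sqrt{1/(k\log k)}\bigr)$ and destroys the result. (The alternative reading of your claim---that an adversary forbidden from ever crashing $v_{i^\star}$ still forces the Bar-Joseph--Ben-Or bound---is circular: that adversary is precisely a \wadapt\ adversary on the $k$-process system.) The paper's fix is to let the simulated \sadapt\ adversary crash only $k/2$ of the $k$ groups, which still yields $\Omega(f'/\sqrt{k\log k})=\Omega(\sqrt{k/\log k})$; then at least half the groups survive \emph{every} execution, so by pigeonhole some fixed group $E$ survives in executions of total probability at least $1/2$, and combining with Markov ($\Pr[\text{execution is long}]\ge 3/4$) gives $\Pr[\text{long and }E\text{ survives}]\ge 1/4$, a constant loss.

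The \noadapt\ extension is also not established by your argument. Yao's principle lets you trade the algorithm's randomness against a distribution over adversaries, but it does not by itself convert the adaptive Bar-Joseph--Ben-Or adversary into a distribution supported on \emph{non-adaptive} crash schedules; asserting that such a hard distribution exists for every deterministic algorithm is essentially the statement to be proved. The paper instead uses a quantitative property of the Bar-Joseph--Ben-Or construction---in each round the required crash set succeeds with probability $1-O(1/\sqrt{m})$---so a union bound over the $\Theta(\sqrt{m/\log m})$ rounds identifies a single fixed path in the execution tree, i.e., a schedule of group crashes fixable before the execution, that forestalls consensus with probability close to $1$.
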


\begin{proof}
Consider a consensus algorithm.
We partition $n$ processes into $m=\frac{n}{n-f}$ groups of size $n-f$
and treat each group as a distributed entity similar as in the proof of the impossibility result for consensus against at least third part of Byzantine processes, c.f.,~\cite{AtWe}. 

We assume that entities are described by a tuple, where each position in the tuple corresponds to a certain processor and the elements of the tuple 
describe processors' states. Furthermore, we assume that communication between entities is by sending concatenated messages of all the processes gathered in a particular round.

Bar-Joseph and Ben-Or showed in~\cite{BB}
that the expected number of rounds of any consensus protocol
running against a \sadapt\ adversary causing up to $m/2$
crashes on $m$ processes is $\Omega\left(\sqrt{m/\log m}\right)$. 
We call this adversary the Bar-Joseph and Ben-Or adversary, for reference.
It automatically implies, by Markov inequality, 
that with probability 
at least $3/4$ an execution generated by the 
Bar-Joseph and Ben-Or adversary 
has length $\Omega\left(\sqrt{m/\log m}\right)$.
We apply this to the above mentioned system of $m$ groups.
This leads us to a class of executions that we denote $\mathcal{A}_{s}$, generated by the \sadapt\ adversary that simulates all the executions of given algorithms until half of the entities remain operational. 
We consider a class $\mathcal{A}_{s}$ as an execution tree that we define as follows.

Let us assume that the execution tree of an algorithm is a tree where each path from the root to the leaves represents the number of rounds a particular execution takes. What is more, nodes are labeled with random bits and the status of entities (which of them are operational) at a particular round of the execution.
Note that possible system configurations at round $i$ are represented
as nodes at level $i$ of the tree.
This may lead to an enormously huge tree, with each path/node having
its probability of occurrence; yet we assume that the adversary has sufficient computational power to build such a structure.

We call such an execution \emph{long}
if it has length $\Omega\left(\sqrt{m/\log m}\right)$.
Recall that the probability
of an event that an execution against the Bar-Joseph and Ben-Or adversary
is in class $\mathcal{A}_{s}$ and is long is at least $3/4$.

The tree is built along paths (corresponding to specific executions) until half of the entities remain operational, hence there is an entity (i.e., group) $ E $ that remained operational in at least half of these executions
(or more precisely, in executions which occur with probability at
least $1/2$), again by using a pigeonhole type of arguments with respect to the remaining groups and executions.
It follows that with probability at least $3/4-1/2=1/4$
a long execution in $\mathcal{A}_{s}$ contains group $E$.

Let us now consider a \wadapt\ adversary that chooses $ E $ as the group of $n-f$ fault-resistant processes, before the execution of the algorithm.
We consider a class of executions $\mathcal{A}_{w}$ generated by the \wadapt\ adversary of Bar-Joseph and Ben-Or, by simulating the same executions (until possible) as the \sadapt\ one considered above. Note that such adversary crashes whole groups.
With probability at least $1/4$ there is an execution containing $E$ (therefore, the \wadapt\ adversary does not stop it before the \sadapt\ does) and of length $\Omega\left(\sqrt{\frac{n}{(n-f)\log(n/(n-f))}}\right)$.
This implies the expected length of an execution caused
by the \wadapt\ adversary not failing group $E$
to be at least 
$1/4\cdot \Omega\left(\sqrt{\frac{n}{(n-f)\log(n/(n-f))}}\right)
=\Omega\left(\sqrt{\frac{n}{(n-f)\log(n/(n-f))}}\right)$.
%

In order to extend this bound into the model with \noadapt\ 
adversary, we recall that in the proof of the lower bound
in~\cite{BB}, in each round a suitable set of crashes
could be done with probability $1-O(1/\sqrt{m})$. Therefore,
by a union bound over $\Theta(\sqrt{m/\log m})$ rounds, there
is a path in the tree used in the first part of this proof
which leads to $\Theta(\sqrt{m/\log m})$ rounds without consensus,
with probability at least $1-1/\log m$.
The adversary selects this path, corresponding to the schedule of crashes of subsequent groups on the path in consecutive rounds,
prior the execution. This assures that in the course of the execution consensus is not achieved within the first $\Theta(\sqrt{m/\log m})$) rounds, with probability at least
$1-1/\log m$, thus also in terms of the expected number of rounds.
\end{proof}

Note that the lower bound on the expected number of point-to-point
messages is $\Omega(n)$ even for weaker non-adaptive adversaries, as each process has to receive a message to guarantee agreement in case of potential crashes, c.f.,~\cite{GK-SODA-10} for references.

\section{\chainadapt\ and \kthick\ adversaries}

In this section we present \algc\ - a modification of \alga\ specifically tailored to run against the \chainadapt\ adversary. 
\dk{Then we also} show that it is capable of running against a \kthick\ adversary. We begin with a basic description of the algorithm.

\subsection{\algc}
%
%

\begin{algorithm}
{\alga\ with \elect\ substituted by \electc\;}

\caption{\algc, \dk{pseudocode} for process~$v$}
\label{algorithm4}
\end{algorithm}

\begin{algorithm}
{initialize variable $ n^* $\;}
{$n^* := \countstations $\;}
{$ i = \lfloor n/n^* \rfloor $\;}
{$\texttt{coin} := \frac{k}{2^{i-1} n^{*}}$\;}
{initialize list $ \texttt{LEADERS} $ to an empty list\;}
{
{toss a coin with the probability $ \texttt{coin} $ of heads to come up\;}
{\If{heads came up in the previous step}{
    \prop(``$ v $'') to all other processes\;
    add $ v $ to list $\texttt{LEADERS}$\;}
}
}
{fill in list $ \texttt{LEADERS}$ with elected leaders' identifiers from received messages\;}

return $\texttt{LEADERS}$\;
\caption{\electc, \dk{pseudocode} for process~$v$}
\label{algorithm5}
\end{algorithm}

\begin{algorithm}
{\prop(``$v $'') to all other processes\;}
{return \texttt{the number of ID's heard} \;}

\caption{\countstations, \dk{pseudocode} for process~$v$}
\label{algorithm6}
\end{algorithm}


The algorithm begins with electing a number of leaders in \electc. However, as the adversary models its pattern of crashes into $ k $ disjoint chains then we would like to elect approximately $ k $ leaders. 

It may happen that the adversary significantly reduces the number of processes and hence the leader election procedure is unsuccessful in electing an appropriate number of leaders. That is why we adjust the probability of success by approximating the size of the network before electing leaders. If the initial number of processes was $ n $ and the drop in the number of processes after estimating the size of the network was \textit{not significant (less than half the number of the approximation)} then we expect to elect $ \Theta(k) $ leaders.

Otherwise, if the number of processes was reduced by more than half, the probability of success is changed and the expected number of elected leaders is reduced. This helps to shorten executions of \lcon\, because a smaller number of leaders executes the protocol faster.
In general if there are $ \frac{n}{2^{i}} $ processes, we expect to elect $ \Theta\left(\frac{k}{2^{i}}\right) $ leaders.

Elected leaders are expected to be placed uniformly in the adversary's order of crashes. If we look at a particular leader~$ v $, then he will be present in some chain $ k_i $. What is more, his position within this chain is expected to be in the middle of~$ k_i $.

Leaders execute the small consensus protocol \lcon. If they reach consensus, then they communicate this fact twice to the rest of the system. Hence, if the adversary wants to prolong the execution, then it must crash all leaders. Otherwise, the whole system would reach consensus and end the protocol.

If leaders are placed uniformly in the adversary's order, then the adversary must preserve the pattern of crashes that it declared at first. In what follows, if there is a leader $ v $ that is placed in the middle of chain $ k_i $, then half of the processes preceding $ v $ must also be crashed.

When the whole set of leaders is crashed then another group is elected and the process continues until the adversary spends all its possibilities of failing processes.

\begin{theorem} \label{theorem2}
The following hold \algc\ against the \chainadapt\ adversary:

\noindent
\textbf{a)} \algc\ reaches consensus in the expected number of rounds equal $ \mathcal{O}(\timelc(k)\log(n/k))$, satisfying termination, agreement and validity.

\noindent
\textbf{b)} Expected work per process of \algc\ is $ \frac{k}{n} \log\left(\frac{n}{k}\right) \sum_{i = 0}^{\log k} \mathcal{O}\left(\timelc\left(\frac{k}{2^i}\right)\right) $.
 
 \noindent
\textbf{c)} \algc\ reaches consensus in the number of rounds equal $ \log\left(\frac{n}{k}\right)\sum_{i = 0}^{\log k} \mathcal{O}\left(\timelc \left(\frac{k}{2^i}\right)\right) $ and with work per process $ \frac{k}{n} \log\left(\frac{n}{k}\right) \sum_{i = 0}^{\log k} \mathcal{O}\left(\timelc\left(\frac{k}{2^i}\right)\right) $ at least with probability $ 1 - e^{-\frac{k^{1/4}}{24} + \log\log k^2} $.
 
\end{theorem}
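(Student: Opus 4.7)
The plan is to adapt the success-event framework from the proof of Theorem~\ref{theorem1} to the multi-phase setting of \algc, exploiting that the chain-ordering constraint forces the adversary to expend crashes along predictable prefixes of chains.

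First, I would establish two preliminary claims about \electc\ playing the role of Lemmas~\ref{lemma1}--\ref{lemma2} but sensitive to the current population size. Say the execution is in \emph{phase}~$i$ when \countstations\ returns a value $n^*$ with $n/2^{i+1} < n^* \le n/2^i$; then \electc\ uses bias $\Theta(k/n)$, so the expected number of elected leaders in phase~$i$ is $\Theta(k/2^i)$. A Chernoff-style calculation then gives: (P1) at most $O(k/2^i)$ leaders are elected with probability at least $9/10$; and (P2) with constant probability some elected leader lies in the \emph{interior} of a chain, i.e., at a position inside some chain $k_j$ having $\Omega(l_j/2)$ still-surviving predecessors in the adversary's partial order. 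Claim (P2) will use that the chain-ordering constraint forces the adversary to have already crashed a prefix of every chain, so the surviving processes within each chain form a contiguous suffix; sampling these with bias $\Theta(k/n)$ reaches an interior position with constant probability.

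Next I would define an iteration to be a \emph{success} when (P1), (P2), and termination of \lcon\ within its expected time $T_{LC}(k/2^i)$ all hold; by the same $\mathbb{P}(A \cap B) \ge \mathbb{P}(A)+\mathbb{P}(B)-1$ reasoning used in Theorem~\ref{theorem1}, success occurs with constant probability. In a successful iteration the adversary either lets the small consensus propagate through the double-\prop\ step and the protocol halts, or it must crash every elected leader; the interior leader from (P2) then forces the adversary to crash $\Omega(n^*/2) = \Omega(n/2^{i+1})$ processes along its chain, which is enough to drive the phase index from $i$ to $i+1$. Consequently, in expectation only $O(1)$ successful iterations occur per phase, and the phase index can rise at most $\log(n/k)$ times.

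For part~(a) I would sum iteration costs across phases, getting expected time $\sum_{i=0}^{\log(n/k)} O(T_{LC}(k/2^i)) = O(T_{LC}(k) \log(n/k))$ by monotonicity of $T_{LC}$; validity, agreement, and termination carry over from the proof of Theorem~\ref{theorem1} because the structural reasoning there only relies on the properties of \lcon\ and on the double propagation. For part~(b), a fixed process does substantive work in an iteration only when elected, which happens with probability $\Theta(k/n)$, so conditional work $O(T_{LC}(k/2^i))$ weighted by this election probability and summed over phases yields the stated formula. For part~(c), I would replace the constant-expected-iterations argument by a Chernoff bound: the probability of needing more than $\Theta(\log(n/k))$ iterations to advance a single phase decays exponentially, and the exponent $k^{1/4}/24$ together with the additive $2\log\log k$ slack arises from applying such a concentration inequality to the interior-leader indicator over $\Theta(\sqrt{k})$ independent coin flips within a phase and then union-bounding across the $O(\log k)$ distinct leader-set sizes.

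The main obstacle will be the rigorous proof of (P2). The adversary is adaptive and sees all prior randomness, so it may preferentially crash processes of chains it expects to supply leaders; one must exploit that the chain constraint pins the adversary into a prefix-closed crashing pattern, so whichever chain happens to produce the next leader, the leader's position within its chain remains well-distributed conditional on the adversary's history. This is the step specific to the chain structure and absent from the analysis of \alga, and it is what produces the dependence on $k$ rather than on $n-f$.
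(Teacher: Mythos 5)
Your overall skeleton---population-sensitive leader election, a constant-probability success event per iteration, and a sum over the $\log k$ regimes of leader-set sizes---matches the paper's proof, and your parts (a) and (b) would be assembled essentially the same way (the paper uses Wald's equation for (b)). However, the quantitative heart of your argument contains a genuine error. You claim that the single interior leader guaranteed by (P2) forces the adversary to crash $\Omega(n^*/2)=\Omega(n/2^{i+1})$ processes, thereby advancing your total-population phase index. Crashing an interior leader of chain $k_j$ only forces the adversary to crash the $\Omega(l_j/2)$ predecessors \emph{within that one chain}; since the $k$ chains partition the $n$ processes, this is typically $\Theta(n/k)$ crashes, not $\Theta(n^*)$. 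With only one chain halved per iteration, exhausting the adversary's order can take on the order of $\sum_j \log l_j$, i.e.\ up to $\Theta(k\log(n/k))$, iterations, which is worse than the claimed bound by a factor of $k$. The paper's argument is per chain: every elected leader sits in expectation at the midpoint of the surviving suffix of \emph{its own} chain, and since $\Theta(k)$ leaders are elected across the $k$ chains, essentially every chain is expected to be halved in every iteration; a chain therefore survives only $\mathcal{O}(\log(n/k))$ iterations, after which the adversary has no legal crashes left. To repair your proof you must strengthen (P2) from ``some leader is interior'' to ``in expectation each chain that supplies a leader is halved'' (equivalently, a constant fraction of the chains receive an interior leader), and decouple this progress measure from your total-population phase index, which the paper uses only to control the \emph{number of elected leaders} via the regimes $l=k/2^i$, $0\le i\le \log k$, not to measure progress against the adversary.

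A secondary discrepancy is in part (c): the exponent $k^{1/4}/24$ does not come from concentration over $\Theta(\sqrt{k})$ coin flips. In the paper it arises because the smallest leader target is $l=k/2^{\log k}=k^{1-\log 2}\ge k^{1/4}$, which, plugged into the two-sided Chernoff bound of the election lemma (failure probability $2e^{-l/24}$) and combined with a Bernoulli-inequality union bound over the $\log k$ regimes, yields $1-2\log k\, e^{-k^{1/4}/24}=1-e^{-k^{1/4}/24+\log\log k^2}$. Your proposed route would not reproduce this exponent.
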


\begin{corollary} \label{cor2}
 Instantiating \lcon\ with \synran\ from \cite{BB} results in:
 
 \noindent
  \textbf{a)} $ \mathcal{O}\left(\sqrt{\frac{k}{\log k}}\log(n/k)\right) $ expected number of rounds to reach consensus by \algc,
  
  \noindent
  \textbf{b)} expected work per process equal $ \frac{k}{n} \log\left(\frac{n}{k}\right) \mathcal{O}\left(\sqrt{\frac{k}{\log k}}\log(n/k)\right) $,
  
  \noindent
  \textbf{c)} $ \mathcal{O}\left(\sqrt{\frac{k}{\log k}}\log(n/k)\right) $ rounds to reach consensus by \algc\ with probability $ 1 - e^{-\frac{k^{1/4}}{24} + \log\log k^2} $. 
\end{corollary}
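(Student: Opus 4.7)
The proof is essentially a plug-in: instantiate $\timelc$ in the generic bounds of Theorem~\ref{theorem2} with the performance of \synran. I would begin by recalling that the Bar-Joseph and Ben-Or algorithm \synran\ from~\cite{BB} solves consensus on $m$ processes against a \sadapt\ adversary in $\timelc(m)=\mathcal{O}(\sqrt{m/\log m})$ expected rounds with the high-probability guarantee required by our \lcon\ black-box specification (a group of $\Theta(k)$ leaders easily fits the \sadapt\ framework since the leaders form an unrestricted sub-population). This is exactly the quantity that Theorem~\ref{theorem2} leaves abstract.

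For part (a), the substitution is immediate: Theorem~\ref{theorem2}(a) gives $\mathcal{O}(\timelc(k)\log(n/k))$, and plugging $\timelc(k)=\mathcal{O}(\sqrt{k/\log k})$ yields $\mathcal{O}(\sqrt{k/\log k}\,\log(n/k))$. No asymptotic reshuffling is needed.

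For parts (b) and (c), the expressions contain the sum $S:=\sum_{i=0}^{\log k}\timelc(k/2^{i}) = \sum_{i=0}^{\log k}\mathcal{O}\!\bigl(\sqrt{(k/2^{i})/\log(k/2^{i})}\bigr)$, which I must collapse to $\mathcal{O}(\sqrt{k/\log k})$. I would handle this by splitting the summation range at $i=\tfrac{1}{2}\log k$. On the lower half ($i\le \tfrac{1}{2}\log k$) one has $\log(k/2^{i})=\Theta(\log k)$, so each term is $\mathcal{O}(\sqrt{k/(2^{i}\log k)})$, and the sum is a geometric series dominated by the $i=0$ term, contributing $\mathcal{O}(\sqrt{k/\log k})$. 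On the upper half ($i>\tfrac{1}{2}\log k$), the factor $k/2^{i}$ is at most $\sqrt{k}$, so every term is $\mathcal{O}(\sqrt{\sqrt{k}})=\mathcal{O}(k^{1/4})$, and summing over the $\mathcal{O}(\log k)$ indices gives $\mathcal{O}(k^{1/4}\log k)$, which is absorbed into $\mathcal{O}(\sqrt{k/\log k})$. Therefore $S=\mathcal{O}(\sqrt{k/\log k})$.

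Combining this bound on $S$ with Theorem~\ref{theorem2}(b) gives the expected work-per-process claim $\frac{k}{n}\log(n/k)\cdot\mathcal{O}(\sqrt{k/\log k}\,\log(n/k))$ after factoring; Theorem~\ref{theorem2}(c) gives the high-probability bound of $\log(n/k)\cdot S = \mathcal{O}(\sqrt{k/\log k}\,\log(n/k))$ rounds with the probability $1-e^{-k^{1/4}/24+\log\log k^{2}}$ inherited verbatim. The only step requiring care is the geometric sum evaluation; everything else is a mechanical substitution. I do not foresee a substantive obstacle, only the need to verify that \synran\ fulfils the two \lcon\ black-box requirements stated in Section~\ref{tools} when run on the set of elected leaders, which is straightforward since the leaders see themselves as an isolated consensus instance.
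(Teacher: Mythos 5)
Your proof is correct and follows essentially the same route as the paper: substitute $\timelc(m)=\mathcal{O}(\sqrt{m/\log m})$ for \synran\ into Theorem~\ref{theorem2} and collapse $\sum_{i=0}^{\log k}\sqrt{(k/2^{i})/\log(k/2^{i})}$ to $\mathcal{O}(\sqrt{k/\log k})$, which the paper does in Fact~\ref{fact1} by uniformly lower-bounding $\log(k/2^{i})\ge\frac{1}{4}\log k$ over the whole range (using $2^{\log k}=k^{\log 2}$ under the natural-log convention) and reducing everything to a single geometric series, rather than by your split at $i=\frac{1}{2}\log k$. Your variant is equally valid (the upper-half bound needs $\log(k/2^{i})\ge 1$, which holds since $k/2^{i}\ge k^{1-\log 2}$ throughout), so the difference is only bookkeeping in the sum evaluation.
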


\subsection{\algc\ against the adversary limited by an arbitrary partial order}

Let us consider the adversary that is limited by an \textbf{arbitrary} partial order $P=(P,\succ)$. Two elements in this partially ordered set 
are \textit{incomparable} if neither $x \succ y $ nor  $y\succ x$ hold. Translating this into our model, the adversary may 
crash incomparable elements in any sequence during the execution of the algorithm.

\begin{theorem} \label{theorem3}
\algc\ reaches consensus in expected $ \mathcal{O}(\timelc(k)\log(n/k))$
number of rounds, against the \kthick\ adversary, satisfying termination, agreement and validity.
\end{theorem}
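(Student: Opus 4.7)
The approach is to reduce the \kthick\ case to the already-established \chainadapt\ case via a classical combinatorial decomposition. Specifically, the plan is to invoke Dilworth's theorem: any finite partial order $P$ whose maximal anti-chain has length $k$ admits a decomposition into exactly $k$ pairwise disjoint chains $C_1,\dots,C_k$ whose union covers all elements of $P$. This structural fact is the bridge between the two adversarial models, since \algc\ never examines the adversary's partial order directly and only cares about how many ``independent tracks'' of crashes can proceed in parallel.

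Next, I would argue that the \kthick\ adversary constrained by $P$ is (at most) as powerful as the \chainadapt\ adversary that fixes the chain decomposition $C_1,\dots,C_k$ obtained from Dilworth's theorem. The argument is a direct simulation: any legal crash sequence produced by the \kthick\ adversary respects the order $\succ$ on $P$. In particular, since each chain $C_i$ is totally ordered by $\succ$, the induced restriction of any legal sequence to $C_i$ is consistent with the chain order, which is exactly the constraint faced by the \chainadapt\ adversary. Conversely, the \chainadapt\ adversary is free to ignore the cross-chain relations of $P$, so its family of legal crash schedules contains (as a subset) the family available to the \kthick\ adversary. Hence, round by round, any adaptive decision of the \kthick\ adversary can be executed verbatim by a suitable \chainadapt\ adversary operating on the same random coins and messages of \algc.

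With this simulation in hand, the result follows by transferring Theorem~\ref{theorem2}. Because the distribution over executions induced by any \kthick\ adversary on \algc\ is realizable as a distribution over executions induced by some \chainadapt\ adversary with $k$ chains, every event about termination, agreement, validity, and running time that holds against \chainadapt\ adversaries also holds against the \kthick\ one. In particular, the expected round count $\mathcal{O}(\timelc(k)\log(n/k))$ from Theorem~\ref{theorem2}(a) immediately carries over, and the safety properties (agreement and validity, which hold in \emph{every} execution) transfer as well.

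The main subtlety to be careful with is the \emph{adaptivity} of the simulation: the \kthick\ adversary sees local coins and messages before committing to crashes, and its legal moves depend on which processes have already crashed. I would make explicit that the simulating \chainadapt\ adversary may be constructed online, inheriting the same view of coins and messages at each round and issuing the same crash request — legality in the chain model is automatic since, as observed above, every $\succ$-legal move is also chain-legal for the Dilworth decomposition. Once this online correspondence is stated, the probabilistic bounds on time, together with termination with probability~$1$, transfer verbatim from Theorem~\ref{theorem2}, completing the proof.
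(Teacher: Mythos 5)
Your proposal is correct and follows essentially the same route as the paper: both apply Dilworth's theorem to cover the thickness-$k$ partial order by $k$ disjoint chains and then observe that the cross-chain relations only further constrain the adversary, so the \kthick\ case reduces to Theorem~\ref{theorem2} for the \chainadapt\ adversary. Your write-up is somewhat more explicit about the online simulation and the adaptivity of the adversary, but the underlying argument is the same.
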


\subsection{Lower bound}

\begin{theorem}
\label{t:lb-chain}
	For any reliable randomized algorithm solving consensus in a message-passing model
	and any integer $0<k\le f$, 
	there is a 
	\kthick\ adversary that
	can force the algorithm to run in 
	$\Omega(\sqrt{k/\log k})$ expected number
	of rounds.
\end{theorem}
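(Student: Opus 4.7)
The plan is to follow the template of the weakly-adaptive lower bound proof (Theorem~\ref{t:lower}), transferring Bar-Joseph and Ben-Or's $\Omega(\sqrt{k/\log k})$ \sadapt\ bound to the \kthick\ setting via an entity-based reduction. First I would fix a specific partial order: partition the $n$ processes into $k$ groups of size $n/k$ and order each group internally as a chain. The resulting partial order has width exactly $k$ (picking one element per chain yields a maximum anti-chain), so the induced adversary is \kthick\ (in fact even \chainadapt).

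Each chain is then treated as a single virtual entity whose state is the tuple of states of its constituent processes and whose inter-entity messages are concatenations of the processes' messages, exactly as in the proof of Theorem~\ref{t:lower}. I would invoke BB's lower bound on $k$ entities against a \sadapt\ adversary crashing up to $k/2$ of them, which yields expected length $\Omega(\sqrt{k/\log k})$. By Markov's inequality, in the associated execution tree at least a $3/4$ fraction of executions have length $\Omega(\sqrt{k/\log k})$, and by a pigeonhole over the $k$ entities, some entity $E$ remains operational in at least half of these long executions. Hence with probability at least $1/4$ a long execution contains $E$ still alive at the end, exactly as in the WA argument.

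Finally, I would define the target \kthick\ adversary to be the one that fixes the partial order above, designates $E$'s chain as never-to-be-crashed, and otherwise mimics the \sadapt\ adversary's schedule on the remaining $k-1$ chains by crashing processes one-per-chain-per-round in chain order. The main obstacle will be the time-granularity mismatch: fully silencing an entity requires $n/k$ rounds in the \kthick\ model, whereas the \sadapt\ adversary can do it in a single round. I expect to handle this by bundling $n/k$ consecutive \kthick\ rounds into one \sadapt\ ``super-round'' for the purposes of the simulation; the resulting effective lower bound becomes $\Omega((n/k)\sqrt{k/\log k}) = \Omega(n/\sqrt{k\log k})$, which since $n\ge k$ dominates the claimed $\Omega(\sqrt{k/\log k})$. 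Combining this with the constant ($\ge 1/4$) probability of a long execution surviving $E$ then yields the desired expected lower bound of $\Omega(\sqrt{k/\log k})$ against the constructed \kthick\ adversary.
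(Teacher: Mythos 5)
Your core reduction is the same as the paper's: fix $k$ disjoint chains of length $n/k$ (a partial order of thickness $k$), treat each chain as a virtual entity exchanging concatenated messages, and invoke the Bar-Joseph--Ben-Or \sadapt\ lower bound on $k$ entities, replacing $n$ by $k$ to get $\Omega(\sqrt{k/\log k})$. The paper's (sketch) proof stops essentially there: whenever BB's analysis crashes a process $v$, the \kthick\ adversary crashes the entire chain containing $v$.

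Two of your additions deserve comment. First, the Markov/pigeonhole/preserved-entity-$E$ machinery you import from the \wadapt\ lower bound (Theorem~\ref{t:lower}) is not needed here. That step exists in the \wadapt\ proof only because that adversary must commit to the fault-resistant set before the execution; a \kthick\ adversary commits only to the partial order and remains fully adaptive in choosing \emph{which} chains to crash and \emph{when}, so it can mimic BB's strongly-adaptive entity adversary directly, with no loss of a constant factor and no need to designate a surviving chain in advance. Your version is still valid (a self-restricted adversary is still a legal \kthick\ adversary), just longer. Second, and more substantively, your ``super-round'' patch for the granularity mismatch is the weak point. In the paper's model the precedence relation constrains only the \emph{order} of crashes, and the adversary may crash an entire chain prefix within a single round (both the paper's proof sketch and its upper-bound analysis of \algc\ rely on this), so the mismatch you worry about does not arise. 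If it did arise, your fix would not be sound as stated: bundling $n/k$ rounds into one adversarial move gives the algorithm $n/k$ extra communication rounds while a chain is only partially silenced, which is exactly the kind of progress BB's round-by-round inductive argument does not control; lower bounds do not simply multiply by the bundling factor, and the claimed $\Omega((n/k)\sqrt{k/\log k})$ does not follow. Finally, note that both your argument and the paper's share an unaddressed budget issue: crashing up to $k-1$ whole chains of length $n/k$ requires far more than $f$ crashes when $f$ is small relative to $n$, so for general $f\ge k$ the chain lengths (or the placement of the $n-f$ uncrashable processes within the chains) need to be adjusted.
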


\begin{proof}[Sketch]
The adversary takes the order consisting of $n/k$ independent
chains, and follows the analysis for \sadapt\ adversaries in~\cite{BB}, with the following
modification: if the analysis enforces crashing a process $v$,
the adversary crashes the whole chain to which $v$ belongs.
Since there are $k$ chains and $k-1$ possible chain-crashes,
by replacing $n$ by $k$ in the formula obtained in~\cite{BB} we
get $\Omega(\sqrt{k/\log k})$ expected number
of rounds.
\end{proof}

\remove{
Consider the following class of partial orders, called {\em $k$-chain-based}~\cite{KKM}:
it contains $k$ disjoint chains such that: 
\begin{itemize}
	\item
	no two of them have a common successor, and 
	\item
	the total length of the chains is a constant fraction of all elements in the order.
\end{itemize}
The adversary can 
}

\section{Conclusions and open problems}

In this work we showed time and message efficient randomized consensus
against the \wadapt,
\noadapt\ 
and \ordered\ adversaries generating crashes. 
We proved that \dk{all these classes of \constrained} adaptive adversaries are weaker than the
\sadapt\ one.
Our results also extend the state-of-the-art of the study
of popular \noadapt\ adversaries.

\bibliographystyle{plain}
\bibliography{consensus2}

\begin{appendix}
 \section*{Appendix} \label{append}

\section{Additional related work and open problems}


\paragraph{Additional related work.}

Fisher, Lynch and Paterson~\cite{FLP} showed that for the message passing model consensus cannot be solved  deterministically in the {\em asynchronous settings}, even if only one process may crash.
Loui and Abu-Amara~\cite{LA} showed a corresponding result for shared memory.
These impossibility results can be circumvented when randomization is used and the consensus termination condition does not hold with probability~$1$.

Bracha and Toueg~\cite{BT} observed that it is impossible to reach consensus by a randomized algorithm in the asynchronous model with crashes if the majority of processes are allowed to crash.
Ben-Or~\cite{Ben} gave the first randomized algorithm solving consensus in the asynchronous message passing model under the assumption that the majority of processes are non-faulty.
For the strong adversary Abrahamson~\cite{Abr} gave the first randomized solution in the shared memory model, which had exponential work, and Aspnes and Herlihy~\cite{AH} gave the first polynomial-work algorithm for the same adversary.
Aspnes and Waarts~\cite{AsWa}, Bracha and Rachman~\cite{BR}, and Aspnes, Attiya and Censor~\cite{AAC} gave solutions with $\cO(n\,\polylog n)$ work per process.
Aspnes~\cite{Asp-JACM} showed that this was best possible up to a poly-logarithmic factor by showing that any algorithm for the strong adversary requires $\Omega(f^2/\log^2 f)$ work by proving that these many coin flips were needed.
Attiya and Censor~\cite{AtCe} resolved the work complexity of randomized consensus in shared memory against the strong adversary by showing that $\Theta(n^2)$ total work is both sufficient and necessary, building on the work of Aspens and Herlihy~\cite{AH}, Bar-Joseph and Ben-Or~\cite{BB}, and Moses and Rajsbaum~\cite{MR}.
Research on weak adversaries~\cite{Aum,AuBe,Cha}, that may observe coin flips and the content of shared registers, resulted in scalable algorithms of $\cO(n\,\polylog n)$ total work and $\cO(\polylog n)$ work per process.

 
Deterministic consensus solutions operating in time $\cO(f)$ with sub-quadratic message complexities usually rely on gossiping procedures, as given for instance in~\cite{CK-JCSS06,CK-DISC06,GGK}.
All such algorithms  send $\Omega(n^2)$ bits in messages in the worst case, by the nature of gossiping.
Solutions given in~\cite{CK-DISC06,GGK} are \emph{early stopping} i.e. they terminate in~$\cO(f)$ rounds where $f\le n$ is the number of failures that actually occur in a given execution.
Georgiou, Gilbert, Guerraoui and Kowalski~\cite{GGGK} gave a randomized solution for asynchronous message passing systems with processes prone to crashes which sends $o(n^2)$ messages, but their algorithm has $\omega(n^2)$ bit complexity.


{\em Deterministic Byzantine} consensus requires $\Omega(f n)$  message complexity, which can be improved with the use of randomization.
A Byzantine consensus algorithm achieves \emph{consensus with loss~$h(f)$} if at least $n-h(f)$ non-faulty processes eventually decide on a common value when up to~$f$ processes are corrupted.
An \emph{almost-everywhere consensus} means consensus with loss~$h(f)$ for some function $h(f)\le c f$, where~$c<1$ is a constant.
The first randomized solutions for almost-everywhere Byzantine consensus and leader election scalable with respect to bit communication complexity were given by King, Saia, Sanwalani and Vee~\cite{KSSV-FOCS06}.
A tradeoff between the time required to solve an almost everywhere Byzantine consensus by a randomized scalable solution and the magnitude of loss was proved by Holtby, Kapron and King~\cite{HKK}.
They showed that if $f$ is a constant fraction of~$n$ then a scalable solution requires $\Omega(n^{1/3})$ rounds.

The consensus problem has been recently considered against different adversarial scenarios. Robinson, Scheideler and Setzer \cite{RSS} considered the synchronous consensus problem under 
a late $\epsilon$-bounded adaptive adversary, whose observation of the system is delayed by one round and can block up to $ \epsilon n $ nodes in the sense that they cannot receive and send messages 
in a particular round.

\paragraph{Open problems.}
Three main open directions emerge from this work.
One is to improve message complexity for the whole range of the number
of crashes $f$,
especially those close to $n$ by less than $\sqrt{n}$.
Another open direction could pursue a study of complexities of other
important distributed problems and settings against 
\wadapt\ and \ordered\ adversaries, \dk{which
	are more realistic than the \sadapt\ one and 
	more general than the 
	\noadapt\ 
	one, commonly used in the literature.}
Finally, there is a scope of proposing and studying other intermediate types of adversaries, including
further study of recently proposed 
delayed adversaries~\cite{KKM} and
adversaries tailored for dynamic distributed and parallel computing~\cite{KM-ICALP2018}.

\section{\alga\ flow diagram}

\begin{figure*}[htb]
  \begin{center}
    \includegraphics[scale=0.63]{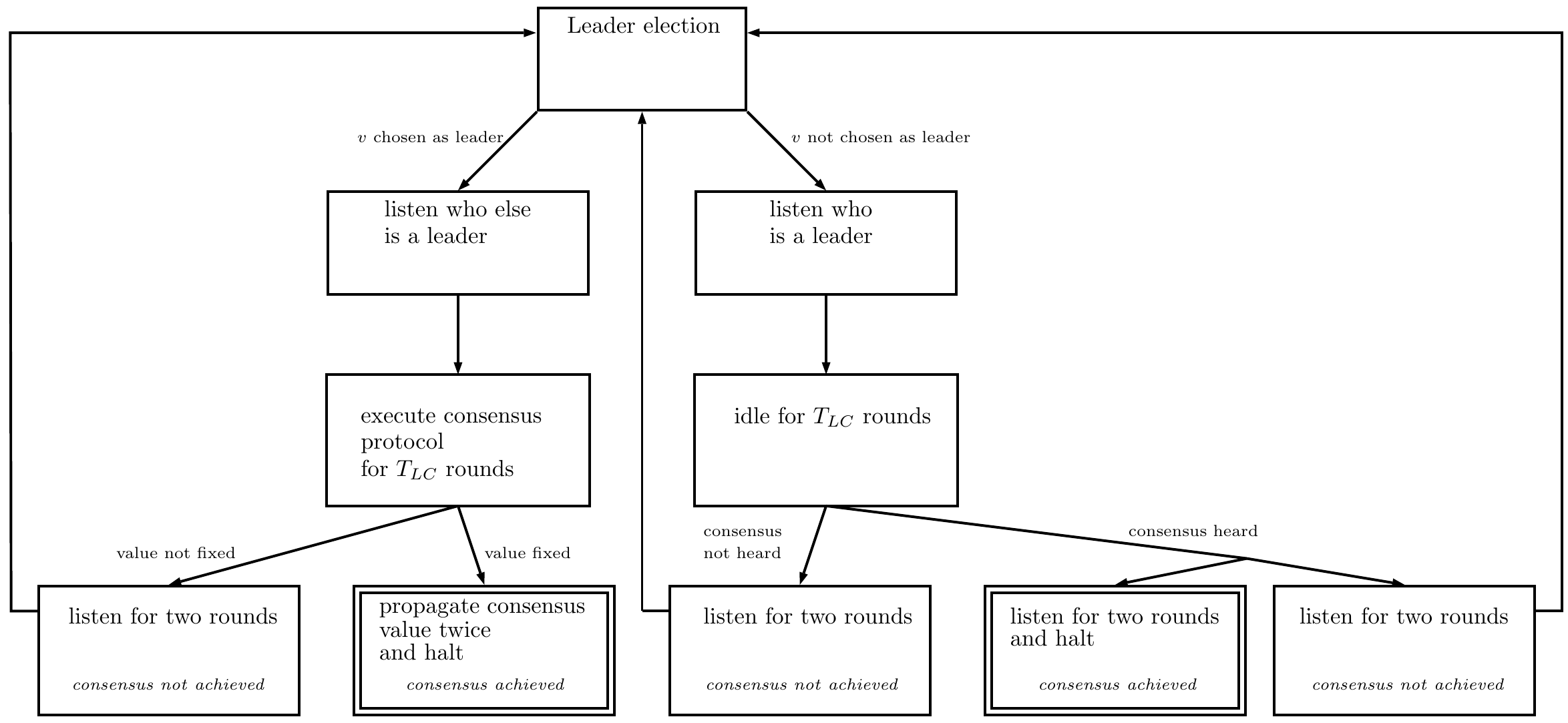}
    \caption{\alga\ flow diagram for process $ v $.}
    \label{fig1}
  \end{center}
\end{figure*}

\section{Proof of Lemma \ref{lemma1}}

\begin{proof}

Let $ X $ be a random variable such that $ X = X_{1} + \cdots + X_{n}, $ where $ X_{1}, \cdots, $ $X_{n} $ are Poisson trials and
 
$
X_{i} = \left\{ \begin{array}{ll}
1 & \textrm{if process i chosen a leader,}\\
0 & \textrm{otherwise.}
\end{array} \right 
.$

We know that $ \mu = \mathbb{E}X = \mathbb{E}X_{1} + \cdots + \mathbb{E}X_{n} = \frac{n}{n-f}. $
To estimate the probability that the number of leaders lies within the expected interval we will use the following Chernoff's inequality:
$ \mathbb{P}[X \geq R] \leq 2^{-R} $, where $ R \geq 6\mu. $

Let $ R = 6\frac{n}{n-f} $. Then
$ \mathbb{P}\left[X \geq 6\frac{n}{n-f}\right] \leq 2^{-6\frac{n}{n-f}} \leq \frac{1}{2^{6}} \leq \frac{1}{10}. $
In what follows with probability $ \frac{9}{10} $ the number of leaders is less than $ 6\frac{n}{n-f} $. 
\end{proof}

\section{Proof of Lemma \ref{lemma2}}

\begin{proof}
 There are $ n $ processes out of which $ f $ are prone to crashes. Hence the system contains $ n - f $ fault-resistant processes. The probability that a fault-resistant one will respond
 in the election procedure is $ \frac{1}{n-f} $. We estimate the probability of electing at least one fault-resistant leader by a complementary event that none of the fault-resistant processes responded during
 the election procedure:
 $ \left(1 - \left(1 - \frac{1}{n-f}\right)^{n-f}\right) \geq (1-e^{-1}) \geq \frac{6}{10}.$
 
\end{proof}

\section{Proof of Theorem \ref{theorem1} b) and c)}

\begin{proof}

\noindent
   \textbf{b)} In each iteration of \alga\ a process either becomes a leader or stays idle, waiting to hear and adopt the consensus value decided by leaders. This leads to an observation that the actual work is done by processes taking part in \lcon. We know that $ \frac{5}{2} $ is the expected number of iterations of \alga\ and each process becomes a leader in each iteration with probability $ \frac{1}{n-f} $. The expected number of rounds in each iteration is $ \mathcal{O}(\timelc(\frac{6n}{n-f}))$, hence by applying Wald's equation (\cite{MU}, Theorem 12.3) we have that $ \frac{1}{n-f} \; \mathcal{O}(\timelc(\frac{6n}{n-f}))$ is the expected work per process.
   
   \noindent
   \textbf{c)} We know that $ \mathbb{P}(Z) \geq \frac{2}{5} $. Let $ \epsilon > 0 $ be fixed and $ X \sim Geo\left(\frac{2}{5}\right) $. We have that
   $ \mathbb{P}(X \geq i) = \left(1 - \frac{2}{5}\right)^{i-1} $. 
   
   Taking $ i = \frac{5}{2}\log\left(\frac{1}{\epsilon}\right) + 1 $, we have that 
   $ \mathbb{P}\left(X \geq \frac{5}{2}\log\left(\frac{1}{\epsilon}\right) + 1\right) \leq e^{-\log\left(\frac{1}{\epsilon}\right)} = \epsilon $. Thus, $ \mathbb{P}\left(X < \frac{5}{2}\log\left(\frac{1}{\epsilon}\right) + 1\right) > 1 - \epsilon $.
   
   \noindent
   Applying this to the results from points a) and b) of the proof gives the desired result.   

\end{proof}

\section{Proof of Theorem \ref{theorem2}}

In order to prove Theorem \ref{theorem2} we need the following Lemma

\begin{lemma}
 Let $ L $ denote the number of leaders elected by \electc, $ n^* $ be the number of operational processes and $ k $ denote the number of chains in the adversary's order and $ l $ denote the number of leaders that the protocol would like to elect, where $ l = \frac{k}{2^i} $ for $ 0 \leq i \leq \log k $. Then $ l/4 < L < 3l/4 $ at least with probability $ 1 - 2e^{-l/24} $.
 \label{lemma3}
\end{lemma}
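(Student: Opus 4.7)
My plan is to recognize $L$ as a sum of independent Bernoulli random variables and apply the standard multiplicative Chernoff bounds on both tails, then take a union bound. Write $L = \sum_{j=1}^{n^*} X_j$, where $X_j$ is the indicator that the $j$-th operational process flips heads in its local coin toss inside \electc. Since coin tosses are generated independently by each process in phase A, the $X_j$ are independent Bernoulli trials with common success probability $p = \frac{k}{2^{i-1} n^*}$ as specified in Algorithm~\ref{algorithm5}, so by linearity of expectation $\mu := \mathbb{E}[L] = n^* p$, which is $\Theta(l)$ by the protocol's design of adjusting the coin probability according to the approximate network size $n^*$ (recorded by \countstations).

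Next I would apply the textbook Chernoff bounds for sums of independent Bernoulli variables: for every $\delta \in (0,1)$,
\[
\mathbb{P}[L \geq (1+\delta)\mu] \leq e^{-\mu \delta^2 / 3}, \qquad \mathbb{P}[L \leq (1-\delta)\mu] \leq e^{-\mu \delta^2 / 2}.
\]
Choosing $\delta = 1/2$ and matching $(1\pm \delta)\mu$ to the endpoints $l/4$ and $3l/4$ (so that $\mu$ is calibrated to $l/2$), the upper tail yields $\mathbb{P}[L \geq 3l/4] \leq e^{-l/24}$ and the lower tail is at most $e^{-l/16} \leq e^{-l/24}$. A union bound then gives $\mathbb{P}[l/4 < L < 3l/4] \geq 1 - 2 e^{-l/24}$, which is exactly the claim.

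The main obstacle, and the step I expect to require the most care, is bookkeeping: verifying that the coin probability $\frac{k}{2^{i-1}n^*}$ combined with $i = \lfloor n/n^* \rfloor$ and the assumption $n^* \approx n/2^{i-1}$ on the branch under consideration really does yield $\mu = \Theta(l)$ with the constants that align $(1\pm 1/2)\mu$ with $\{l/4, 3l/4\}$. This is where the two cases from the algorithm (``not significant'' drop versus halving) enter: one needs the invariant from \countstations\ guaranteeing that the running estimate $n^*$ is within a constant factor of the current operational population, so that the rescaled coin produces the intended target $l = k/2^i$ up to a constant. Once this is settled the rest is a direct Chernoff application and requires no further probabilistic machinery.
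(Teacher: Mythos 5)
Your proposal is correct and follows essentially the same route as the paper: both decompose $L$ into independent Bernoulli indicators with mean calibrated to $\mu = l/2$, apply the two multiplicative Chernoff tails with $\delta = 1/2$ to get $e^{-l/24}$ and $e^{-l/16}$, and finish with a union bound. The bookkeeping step you flag (that the coin probability actually yields $\mu = l/2$) is also the point the paper treats most lightly, simply asserting the success probability is at most $\frac{l}{2n^*}$.
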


\begin{proof}

Let $ X $ be a random variable such that $ X = X_{1} + \cdots + X_{n^*}, $ where $ X_{1}, \cdots, $ $X_{n^*} $ are Poisson trials and
 
$
X_{j} = \left\{ \begin{array}{ll}
1 & \textrm{if process i chosen a leader,}\\
0 & \textrm{otherwise.}
\end{array} \right 
.$

Since the number of operational processes is $ n^* $ and the probability of success is set to at most $ \frac{l}{2n^*} $ in the \electc\ procedure, we know that $ \mu \leq \mathbb{E}X = \mathbb{E}X_{1} + \cdots + \mathbb{E}X_{n^*} = \frac{l}{2}. $
To estimate the probability that the number of leaders lies within the expected interval we use the following Chernoff's inequalities:
\begin{enumerate}
 \item $ \mathbb{P}[X \geq (1 + \delta)\mu] \leq e^{-\mu\delta^{2}/3}, $ for $ 0 < \delta \leq 1 $.
 \item $ \mathbb{P}[X \leq (1 - \epsilon)\mu] \leq e^{-\mu\epsilon^{2}/2}, $ for $ 0 < \epsilon < 1 $.
\end{enumerate}
\noindent
Taking $ \delta = \epsilon = \frac{1}{2} $, we have that
$ \mathbb{P}\left[X \geq \frac{3l}{4}\right] \leq e^{-l/24} $
and 
$ \mathbb{P}\left[X \leq \frac{l}{4}\right] \leq e^{-l/16} $.
Hence $ \mathbb{P}\left[X < \frac{3l}{4}\right] > 1 - e^{-l/24} $
and 
$ \mathbb{P}\left[X > \frac{l}{4}\right] > 1 - e^{-l/16} $.
In what follows $ l/4 < L < 3l/4 $ at least with probability $ 1 - 2e^{-l/24} $.

\end{proof}

\noindent 
We now proceed with the proof of Theorem \ref{theorem2}

\begin{proof}
\noindent
\textbf{a)}
 \algc\ is structured as a repeat loop at the beginning of which a leader election procedure is executed. \electc\ approximates the size of the network in the first place by a very simple procedure \countstations\ in order to adjust the coin values appropriately. 
 
 After the number of processes is counted it may happen that the adversary crashes some of the processes, so that their number drops significantly i.e. by more than half the number returned by \countstations. We will, however, firstly analyze the opposite situation when the drop is not significant to explain the basic situation and calculate the resulting time for the algoritm to succeed. Then we will proceed to the analysis of the situation when the adversary reduces the number of operational processes significantly and calculate a union bound over all such cases.
 
%


 According to Lemma \ref{lemma3} \electc\ returns $ \Theta(k) $ leaders. In what follows the expected number of rounds required to reach small consensus by \lcon\ is $ \mathcal{O}(\timelc(k)) $.
 
 Note that the adversary has to crash all the leaders at some point of the \lcon\ protocol to prolong the execution. Otherwise, there would be a processe that would reach small consensus and inform all other processes about the consensus value.
 
 According to the adversary's partial order there are initially $ k $ chains, where chain $ j $ has length $ l_{j} $. If a leader is elected then it belongs to one of the chains. We show that it is expected that the chosen leader is placed in the middle of that chain.

 Let $ X $ be a random variable such that $ X_{j} = i $ where $ i $ represents the position of the leader in chain $ j $. We have that
 $ \mathbb{E}X_{j} = \sum_{i=1}^{l_{j}} \frac{i}{l_{j}} = \frac{1}{l_{j}} \frac{(1+l_{j})}{2}l_{j} = \frac{(1+l_{j})}{2}. $

 If the leader is crashed, this implies that half of the processes forming the chain are also crashed.
 If at some other points of time, the leaders will also be elected from the same chain, then by simple induction we may conclude that this chain is expected to be all crashed after $ \mathcal{O}(\log(n/k)) $ steps in average.
 
 In what follows if there are $ k $ chains and we expect to elect $ \Theta(k) $ leaders, then after at most $ \mathcal{O}(\log(n/k)) $ steps this process ends, as the adversary runs out of all the possibilities of crashing. Assuming that the adversary crashes the last leader just before it is to communicate the consensus value we may estimate that each such step needs $ \mathcal{O}(\timelc(k)) $ time for running \lcon. Hence, the expected time for \algc\ to reach consensus is $ \mathcal{O}(\timelc(k)\log(n/k))$.
 
 Let us now proceed to the situation when the adversary crashes a significant number of processes (i.e. more than half of them). In such case procedure \electc\ changes the probability of success of electing the leaders as well as reduces the expected number of leaders that should be elected in the next iteration of the algorithm.
 
 Once again, according to Lemma \ref{lemma3}, we expect procedure \electc\ to return $ \Theta(k/2^i) $ leaders for some $ 0 \leq i \leq \log k $, since the number of such drops is bounded by $ \log k $. Calculating the time required to reach consensus over all cases gives us $ \sum_{i = 0}^{\log k} \mathcal{O}\left(\timelc \left(\frac{k}{2^i}\right) \log\left(\frac{n}{2^i k}\right)\right) = \log\left(\frac{n}{k}\right)\sum_{i = 0}^{\log k} \mathcal{O}\left(\timelc \left(\frac{k}{2^i}\right)\right) $. 

 Altogether we have that \algc\ requires $ \log\left(\frac{n}{k}\right)\sum_{i = 0}^{\log k} \mathcal{O}\left(\timelc \left(\frac{k}{2^i}\right)\right) $ expected number of rounds to reach consensus, where termination, validity and agreement are satisfied because of the same arguments as in Theorem \ref{theorem1}.
 
  \noindent
 \textbf{b)} 
 Similarly as in the proof of Theorem \ref{theorem1}, if there are $ n^* $ operational processes, and there were no significant drops in this number, we know that $ \log\left(\frac{n}{k}\right) $ is the expected number of iterations of \algc\ and each process becomes a leader in each iteration with probability $ \frac{k}{n^*} $. The expected number of rounds in each iteration is $ \mathcal{O}(\timelc(k))$, hence by applying Wald's equation (\cite{MU}, Theorem 12.3) we have that $ \frac{k}{n^*} \; \mathcal{O}(\timelc(k))$ is the expected work per process.
 
 However assuming that significant drops may happen while \lcon\ procedure, we need to calculate a union bound over all the cases, which for each $ 0 \leq i \leq \log k $ gives us $ \log\left(\frac{n}{2^{i}k}\right) $ expected number of iterations of \algc\ and the probability that each process becomes a leader equal $ \frac{k}{2^{i}\frac{n}{2^i}} $. The expected number of rounds in each iteration for a particular $ i $ is $ \mathcal{O}\left(\timelc\left(\frac{k}{2^i}\right)\right) $. Summing this altogether we have that the expected work per process is equal $ \sum_{i = 0}^{\log k} \frac{k}{n} \; \mathcal{O}\left(\timelc\left(\frac{k}{2^i}\right)\right) \log\left(\frac{n}{2^{i}k}\right) = \frac{k}{n} \log\left(\frac{n}{k}\right) \sum_{i = 0}^{\log k} \mathcal{O}\left(\timelc\left(\frac{k}{2^i}\right)\right)$.
 
 \noindent
 \textbf{c)} The success of the algorithm depends on successful leader election through all the cases of significant drops in the number of operational processes.
 According to Lemma \ref{lemma3} for each $ 0 \leq i \leq \log k $ the probability that \lcon\ is successful in returning $ \Theta\left(\frac{k}{2^i}\right) $ leaders is $ 1 - 2e^{-\frac{k}{2^i}\frac{1}{24}} $. We want to calculate the probability that \lcon\ is successful over all cases. Let $ A_{i} $ be an event that is was successful in the $ i $'th case. We have that
 $ \mathbb{P}\left(\bigcap_{i = 1}^{\log k}A_{i}\right) = \bigcap_{i = 1}^{\log k} \left(1 - 2e^{-\frac{k}{2^i}\frac{1}{24}}\right) 
 \geq \bigcap_{i = 1}^{\log k} \left(1 - 2e^{-\frac{k}{2^{\log k}}\frac{1}{24}}\right). $
 Since $ 2^{\log k} = e^{\log 2 \log k} = e^{\log k^{\log 2}} = k^{\log 2}$ and $ \frac{1}{2} < \log 2 < 1 $, thus
 $ \bigcap_{i = 1}^{\log k} \left(1 - 2e^{-\frac{k}{2^{\log k}}\frac{1}{24}}\right) \geq \bigcap_{i = 1}^{\log k} \left(1 - 2e^{- \frac{k^{1/4}}{24}}\right)
 = \left(1 - 2e^{-\frac{k^{1/4}}{24}}\right)^{\log k}. $
 Using the Bernoulli inequality we continue having
 $ \left(1 - 2e^{-\frac{k^{1/4}}{24}}\right)^{\log k} \geq 1 - 2\log k \; e^{-\frac{k^{1/4}}{24}} 
 = 1 - e^{-\frac{k^{1/4}}{24} + \log\log k^2}.$
Applying the result above to points a) and b) finishes the proof.

\end{proof}

Since, similarly as in the analysis of \alga\, we would like to instantiate procedure \lcon\ with algorithm \synran\ from \cite{BB}, we introduce the following fact, allowing us to sum the time complexity appropriately in Corrollary \ref{cor2}.

\begin{fact} \label{fact1}
Let $ n, k \in \mathbb{N} $, where $ 0 \leq k \leq n $. Then
 $\sum_{m=0}^{\log k} \sqrt{\frac{k}{2^m\log\left(\frac{k}{2^m}\right)}} \log\left(\frac{n}{2^m k}\right) = c\; \sqrt{\frac{k}{\log k}} \log\left(\frac{n}{k}\right), $ for some constant $ c $.
\end{fact}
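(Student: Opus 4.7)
\section*{Proof proposal for Fact~\ref{fact1}}

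My plan is to prove the two-sided bound, since the ``$=$ up to a constant $c$'' statement is really a $\Theta$ claim. The lower bound is trivial: the $m=0$ term of the sum on the left is exactly $\sqrt{k/\log k}\cdot\log(n/k)$, so the sum is at least this quantity. The whole content of the statement is therefore the matching upper bound, and the strategy is to show that the sum is dominated (up to a geometric factor) by its first term.

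For the upper bound, I would change variables so the structure becomes transparent. Writing $\log(k/2^m)=\log k - m$ and $\log(n/(2^mk))=\log(n/k)-m\le \log(n/k)$, the $m$-th term is at most
\[
\sqrt{\frac{k}{2^m(\log k - m)}}\cdot \log(n/k).
\]
I would then split the sum at $m^\star=\lfloor (\log k)/2\rfloor$. In the \emph{low-$m$ regime} $0\le m\le m^\star$, we still have $\log k - m\ge (\log k)/2$, so each term is at most $\sqrt{2}\cdot 2^{-m/2}\sqrt{k/\log k}\cdot \log(n/k)$, and summing over $m$ gives a geometric series bounded by a constant times $\sqrt{k/\log k}\cdot \log(n/k)$, which is the target.

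In the \emph{high-$m$ regime} $m^\star < m\le \log k$, the logarithm in the denominator may shrink to a constant, but on the other hand $2^m>\sqrt{k}$, so each term is crudely bounded by $\sqrt{k/2^m}\cdot \log(n/k)\le k^{1/4}\cdot 2^{-(m-m^\star)/2}\cdot \log(n/k)$. The geometric sum of these contributions is $O(k^{1/4}\log(n/k))$, which is asymptotically negligible compared with $\sqrt{k/\log k}\cdot \log(n/k)$ for $k$ above a fixed constant; for small $k$ the whole inequality holds by adjusting the constant $c$. Adding the two regimes yields the required upper bound, and combined with the trivial lower bound it gives the claimed $\Theta$-equality.

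The only technical care needed is the boundary term $m=\log k$, where $\log(k/2^m)=0$ would formally blow up; I would handle it by either restricting the summation to $m\le \log k -1$ (which is how it is actually used in Theorem~\ref{theorem2}), or by interpreting the final term via the convention $\timelc(1)=O(1)$ inherited from the \lcon\ specification. Apart from this bookkeeping, the proof is a straightforward geometric-series estimate, and I do not foresee a genuine obstacle beyond choosing a clean split point and tracking the constants.
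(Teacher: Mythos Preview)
Your argument is correct, but it takes a longer route than the paper's. The key observation you are missing is that throughout the paper $\log$ denotes the natural logarithm, so that $2^{\log k}=e^{(\log 2)\log k}=k^{\log 2}$ with $\tfrac{1}{2}<\log 2<1$. Consequently, for every $m$ in the summation range $0\le m\le \log k$ one has the \emph{uniform} lower bound
\[
\log\!\left(\frac{k}{2^{m}}\right)\ \ge\ \log\!\left(\frac{k}{2^{\log k}}\right)
=\log\!\left(k^{\,1-\log 2}\right)=(1-\log 2)\log k\ \ge\ \tfrac{1}{4}\log k .
\]
With this single inequality in hand, the paper simply factors out $\log(n/k)\ge \log(n/(2^m k))$, replaces $\log(k/2^m)$ by $\tfrac14\log k$, and is left with the geometric series $\sum_{m\ge 0}2^{-m/2}$, yielding the bound $c\sqrt{k/\log k}\,\log(n/k)$ in one line. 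No split at $m^\star$, no separate high-$m$ regime, and no boundary issue at $m=\log k$ (since $k/2^{\log k}=k^{1-\log 2}\gg 1$, the term never blows up).

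Your two-regime decomposition is a perfectly valid alternative and would be the natural thing to do if $\log$ were base~$2$ (which is what your identity $\log(k/2^m)=\log k - m$ and your remark that the $m=\log k$ term ``formally blows up'' implicitly assume). The cost is extra bookkeeping: you need the auxiliary $O(k^{1/4}\log(n/k))$ estimate, the comparison $k^{1/4}=O(\sqrt{k/\log k})$, and a separate treatment of small $k$. The paper's approach buys a cleaner one-shot bound by exploiting the logarithm base; yours is more robust to the choice of base but correspondingly more laborious.
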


\begin{proof}
We have that
 $ \sum_{m=0}^{\log k} \sqrt{\frac{k}{2^m\log\left(\frac{k}{2^m}\right)}} \log\left(\frac{n}{2^m k}\right) \leq 
 \log\left(\frac{n}{k}\right) \sum_{m=0}^{\log k} \sqrt{\frac{k}{2^m\log\left(\frac{k}{2^m}\right)}}. $
 Since $ m \leq \log k $, then $ \log\left(\frac{k}{2^m}\right) \geq \log\left(\frac{k}{2^{\log k}}\right) = \log(k^{1 -\log 2}) = (1 - \log 2)\log k \geq \frac{1}{4} \log k, $
 because $ 2^{\log k} = e^{\log 2 \log k} = e^{\log k^{\log 2}} = k^{\log 2}$ and $ \frac{1}{2} < \log 2 < 1 $.
 Hence
$ \log\left(\frac{n}{k}\right) \sum_{m=0}^{\log k} \sqrt{\frac{k}{2^m\log\left(\frac{k}{2^m}\right)}}
 \leq 2 \log\left(\frac{n}{k}\right) \sum_{m=0}^{\log k} \sqrt{\frac{k}{2^m\log k}} 
 \leq 2 \sqrt{\frac{k}{\log k}}\log\left(\frac{n}{k}\right) \sum_{m=0}^{\log k} 2^{-m/2} 
 \leq c \; \sqrt{\frac{k}{\log k}}\log\left(\frac{n}{k}\right)
 $.

\end{proof}

\section{Proof of Theorem \ref{theorem3}}

\begin{proof}
We assume that crashes forced by the adversary are constrained by some partial order $P$. 
Let us recall the following lemma. 

\begin{lemma}\label{DWlemma}(Dilworth's theorem \cite{Dilworth})
In a finite partial order, the size of a maximum anti-chain is equal to the minimum
number of chains needed to cover all elements of the partial order.
\end{lemma}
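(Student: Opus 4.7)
The plan is to prove both inequalities between the maximum anti-chain size $w$ and the minimum chain-cover size $r$. The easy direction, $w \le r$, follows from the observation that any chain meets any anti-chain in at most one element; if $P$ is covered by $r$ chains, then every anti-chain has size at most $r$, so in particular the maximum anti-chain satisfies $w \le r$.

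For the hard direction I would show by induction on $|P|$ that $P$ decomposes into at most $w$ chains. The base $|P| = 0$ is vacuous. For the inductive step I would pick a \emph{maximal} chain $C$ in $P$ and split into two cases. If the maximum anti-chain of $P \setminus C$ has size at most $w - 1$, the induction hypothesis yields a cover of $P \setminus C$ by $w - 1$ chains, and adjoining $C$ completes the step. Otherwise $P \setminus C$ contains an anti-chain $A$ with $|A| = w$, which is automatically a maximum anti-chain of $P$ itself.

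I would then define the down-set $D^{-} = \{x \in P : x \le a \text{ for some } a \in A\}$ and the up-set $D^{+} = \{x \in P : x \ge a \text{ for some } a \in A\}$. Three facts follow quickly: $D^{-} \cup D^{+} = P$ (otherwise an outside element would be incomparable to all of $A$, contradicting maximality of $A$); $D^{-} \cap D^{+} = A$ (if $a \le x \le a'$ with $a,a' \in A$ then anti-symmetry forces $a = a' = x$); and both $D^{-}$ and $D^{+}$ are proper subsets of $P$, because the top element of the maximal chain $C$ is a maximal element of $P$ lying outside $A$, hence outside $D^{-}$, and symmetrically the bottom element of $C$ lies outside $D^{+}$. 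The induction hypothesis applied to each side produces a cover by $w$ chains, and by the easy direction already proved, each of these chains meets $A$ in exactly one element (at most one by the anti-chain property, and exactly one by counting, since $|A| = w$). I would finish by gluing, for each $a \in A$, the chain through $a$ in $D^{-}$ with the chain through $a$ in $D^{+}$; every other element of the $D^{-}$-chain is $\le a$ (any $x$ in it with $x \ge a$ would satisfy $a \le x \le a'$ for some $a' \in A$, forcing $a = a'$), and symmetrically on the $D^{+}$ side, so the union is indeed a chain. This yields $w$ chains covering $D^{-} \cup D^{+} = P$.

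The main obstacle is arranging that both $D^{-}$ and $D^{+}$ are strictly smaller than $P$ so the induction hypothesis applies; this is precisely why one removes a maximal \emph{chain} rather than merely a maximal element, ensuring extremal witnesses at both ends. A secondary subtlety, needed for the final gluing, is the locational control on the inductive chains relative to $A$, which as noted above follows from $A$ being an anti-chain together with transitivity.
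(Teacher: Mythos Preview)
Your proof is correct; this is essentially Galvin's standard inductive proof of Dilworth's theorem, and every step you outline goes through. Note, however, that the paper does not actually supply its own proof of this lemma: it is stated with a citation to Dilworth's original paper and then used as a black box inside the proof of Theorem~\ref{theorem3}. So there is no in-paper argument to compare against; your write-up simply fills in what the paper leaves to the literature.
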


The \kthick\ adversary is
constrained by an arbitrary order of thickness~$k$.
The adversary by choosing some $f$ processes to be crashed cannot increase the size of the maximal anti-chain. 
Thus using Lemma~\ref{DWlemma} we consider the coverage of the crash-prone processes by at most $k$ disjoint chains and any dependencies between chains' elements
create additional constraints to the adversary comparing to the \chainadapt\ one. Hence we fall into the case 
concluded in Theorem~\ref{theorem2} that completes the proof. 

\end{proof}

\end{appendix}

\end{document}